\begin{document}

\title{Quasi-Optimal Partial Order Reduction\thanks{This paper is the extended
version of a paper with the same title appeared at the proceedings of CAV'18.}}

\ifanonymous
	\author{}
	\institute{}
\else
	\author{Huyen T.T. Nguyen\inst{1}
		\and César Rodríguez\inst{1,3}
		\and Marcelo Sousa\inst{2}
		\and\\ Camille Coti\inst{1}
		\and Laure Petrucci\inst{1}}
	
	\institute{Université Paris 13, Sorbonne Paris Cité, CNRS, France\\
		\and
		University of Oxford, United Kingdom
		\and
		Diffblue Ltd. Oxford, United Kingdom}
\fi

\maketitle 
\pagenumbering{arabic}

\begin{abstract}
A dynamic partial order reduction (DPOR) algorithm is optimal when it always
explores at most one representative per Mazurkiewicz trace. 
Existing literature suggests that the reduction obtained by the non-optimal,
state-of-the-art Source-DPOR (SDPOR) algorithm is comparable to optimal DPOR.
We show the first program\footnote{Shortly after this extended version was made
public, we were made
aware of the recent publication of another paper~\cite{AAJS17} which contains an
independently-discovered example program with the same characteristics.}
with~$\bigo n$ Mazurkiewicz traces where 
SDPOR explores $\bigo{2^n}$ redundant schedules.
We furthermore identify the cause of this blow-up as an NP-hard problem.
Our main contribution is a new approach, called Quasi-Optimal POR, 
that can arbitrarily approximate an optimal exploration 
using a provided constant~$k$. 
We present an implementation of our method 
in a new tool called~\dpu~using specialised data structures.
Experiments with~\dpu, including Debian packages, show that optimality
is achieved with low values of~$k$, outperforming state-of-the-art tools. 
\end{abstract}

\section{Introduction}
\label{sec:intro}

Dynamic partial-order reduction (DPOR)~\cite{FG05,AAJS14,RSSK15} is a mature 
approach to mitigate the state explosion problem in stateless model checking of 
multithreaded programs.
DPORs are based on Mazurkiewicz trace theory~\cite{Maz87}, a true-concurrency 
semantics where the set of executions of the program is partitioned into
equivalence classes known as Mazurkiewicz traces (M-traces).
In a DPOR, this partitioning is defined by an independence relation over
concurrent actions that is computed dynamically and the method
explores executions which are representatives of M-traces.
The exploration is \emph{sound} when it explores all M-traces, 
and it is considered \emph{optimal}~\cite{AAJS14} when it 
explores each M-trace only once. 

Since two independent actions
might have to be explored from the same state 
in order to explore all M-traces, 
a DPOR algorithm uses independence to
compute a provably-sufficient subset of the enabled 
transitions to explore for each state encountered.
Typically this involves the combination of
forward reasoning (persistent sets~\cite{God96} or
source sets~\cite{AAJS14,AAJS17b}) with backward reasoning 
(sleep sets~\cite{God96}) to obtain a more efficient
exploration.
However, in order to obtain optimality, 
a DPOR is forced to compute sequences of 
transitions (as opposed to sets of enabled transitions)
that avoid visiting a previously visited M-trace.
These sequences are stored 
in a data structure called~\emph{wakeup trees} in~\cite{AAJS14}
and known as~\emph{alternatives} in~\cite{RSSK15}.
Computing these sequences thus amounts to deciding whether the DPOR
needs to visit yet another M-trace (or all have already been seen).

In this paper, we prove that computing alternatives 
in an optimal DPOR is an NP-complete problem.
To the best our knowledge
this is the first formal complexity 
result on this important subproblem that
optimal and non-optimal DPORs need to solve.
The program shown in~\cref{fig:ssbs}~(a)
illustrates a practical consequence of this result:
the non-optimal, state-of-the-art SDPOR algorithm~\cite{AAJS14}
can explore here~$\bigo{2^n}$ interleavings but the program has only~$\bigo n$
M-traces.
\begin{figure*}[t]
\centering
\hspace{-30mm}
\begin{tabular}[b]{l}
\scalebox{0.6}{
	\begin{tabular}[b]{l}
		\parbox[b]{16cm}{
			\setlength{\tabcolsep}{15pt}
			\hspace{3pt}
			(a)~
			\large
			\begin{tabular}{ccccc}
				$w_0$ & $w_1$ & $w_2$ & count & master \\
				\hline
				$x_0=7$ & $x_1=8$ & $x_2=9$ & $c=1$ & $i=c$ \\
				& & & $c=2$ & $x_i=0$
			\end{tabular}
				
		} 
      \\[20pt]
		\parbox[b]{12cm}{
		\scalebox{0.85}{
\begin{tikzpicture}
[node distance=1cm, auto, every node/.style={rectangle, draw, text centered, minimum size=5mm}, every
path/.style={->,>=stealth'},
plain/.style={draw=none, text centered}]
\normalsize

\node at (0,1) [label=above:{$w_0$}] (1) {$1$};
\node [right of=1,label=above:{$w_1$}] (2) {$2$};
\node [right of=2,label=above:{$w_3$}] (3) {$3$};
\node [right of=3,label=above:{$i=0$}] (4) {$4$};
\node at (1,-0.25) [label=left:{$x_0=0$}] (5) {$5$};
\node at (3,0) [label=left:{$c=1$}] (6) {$6$};
\node at (3,-1) [label=left:{$c=2$}] (7) {$7$};
\draw (1) -- (5);
\draw (4) -- (5);
\draw (4) -- (6);
\draw (6) -- (7);
\node at (-0.5,-1.5) [plain] (b) {\large (b)};

\node at (4.5,-1) [label=below:{$w_0$}] (8) {$8$};
\node at (4.5,1) [label=above:{$w_1$}] (2b) {$2$};
\node [right of=2b,label=above:{$w_2$}] (3b) {$3$};
\node [right of=3b,label=above:{$i=0$}] (4b) {$4$};
\node at (5,0) [label=left:{$x_0=0$}] (5b) {$5$};
\node at (6.5,0) [label=left:{$c=1$}] (6b) {$6$};
\node at (6.5,-1) [label=left:{$c=2$}] (7b) {$7$};
\draw (4b) -- (5b);
\draw (4b) -- (6b);
\draw (6b) -- (7b);
\draw (5b) -- (8);

\node at (8,1) [label=above:{$w_1$}] (1c) {$1$};
\node [right of=1c,label=above:{$w_2$}] (2c) {$2$};
\node [right of=2c,label=above:{$w_3$}] (3c) {$3$};
\node at (11,0) [label=left:{$i=1$}] (10) {$10$};
\node at (9,-1) [label=above left:{$x_1=0$}] (11) {$11$};
\node at (11,1) [label=above:{$c=1$}] (9) {$9$};
\node [below of=10,label=left:{$c=2$}] (12) {$12$};
\draw (2c) -- (11);
\draw (9) -- (10);
\draw (10) -- (11);
\draw (10) -- (12);

\node at (12.5,1) [label=above:{$w_0$}] (1e) {$1$};
\node [right of=1e,label=above:{$w_1$}] (2e) {$2$};
\node [right of=2e,label=above:{$w_2$}] (3e) {$3$};
\node at (14.5,-0.75) [label=right:{$i=2$}] (15) {$15$};
\node at (13,-1.5) [label=right:{$x_2=0$}] (16) {$16$};
\node at (15.5,1) [label=above:{$c=1$}] (9e) {$9$};
\node at (15.5,0) [label=left:{$c=2$}] (14) {$14$};
\draw (9e) -- (14);
\draw (14) -- (15);
\draw (15) -- (16);
\draw (3e) -- (16);

\node at (17,1) [label=above:{$w_1$}] (1f) {$1$};
\node [right of=1f,label=above:{$w_1$}] (2f) {$2$};
\node at (18.5,-1.5) [label=right:{$w_2$}] (17) {$17$};
\node at (18,-0.25) [label=above:{$i=2$}] (15f) {$15$};
\node at (17,-0.75) [label=below:{$x_2=0$}] (16f) {$16$};
\node at (19,1) [label=above:{$c=1$}] (9f) {$9$};
\node at (19,0) [label=below:{$c=2$}] (14f) {$14$};
\draw (9f) -- (14f);
\draw (14f) -- (15f);
\draw (15f) -- (16f);
\draw (16f) -- (17);

\node at (20.5,1) [label=above:{$w_0$}] (1d) {$1$};
\node at (21.5,-1.5) [label=above:{$w_1$}] (13) {$13$};
\node at (21.5,1) [label=above:{$w_2$}] (3d) {$3$};
\node at (22.5,0) [label=left:{$i=1$}] (10d) {$10$};
\node at (20.5,-0.5) [label=above:{$x_1=0$}] (11d) {$11$};
\node at (22.5,1) [label=above:{$c=1$}] (9d) {$9$};
\node [below of=10d,label=below:{$c=2$}] (12d) {$12$};
\draw (9d) -- (10d);
\draw (10d) -- (11d);
\draw (10d) -- (12d);
\draw (11d) -- (13);

\end{tikzpicture}}
		}	
	\end{tabular}
}
\end{tabular}
\centering
\caption{(a): Programs; (b): Partially-ordered executions;}
\label{fig:ssbs}
\vspace{-15pt}
\end{figure*}

The program 
contains $n \eqdef 3$~\emph{writer} threads $w_0, w_1, w_2$, each writing
to a different variable.
The thread~\emph{count} increments $n-1$ times a zero-initialized counter~$c$.
Thread~\emph{master} reads $c$ into variable~$i$ and writes to~$x_i$.

The statements \mbox{$x_0 = 7$} and $x_1 = 8$ are independent because 
they produce the same state regardless of their execution order.
Statements $i = c$ and any statement in the~\emph{count} thread are dependent 
or~\emph{interfering}: their execution orders result in different states.
Similarly, $x_i = 0$ interferes with exactly one \emph{writer} thread, 
depending on the value of~$i$.

Using this independence relation, the set of executions of this program 
can be partitioned into six~M-traces, corresponding to the 
six~partial orders shown in~\cref{fig:ssbs}~(b).
Thus, an optimal DPOR explores six~executions 
($2n$-executions for~$n$ \emph{writers}).
We now show why SDPOR explores~$\bigo{2^n}$ in the general case. 
Conceptually, SDPOR is a loop that
(1) runs the program,
(2) identifies two dependent statements that can be swapped, and
(3) reverses them and re-executes the program.
It terminates when no more dependent statements can be swapped.

Consider the interference on the counter variable~$c$ between the~\emph{master}
and the~\emph{count} thread.
Their execution order determines which~\emph{writer} thread interferes with
the~\emph{master} statement~$x_i= 0$.
If $c=1$ is executed just before $i=c$, then $x_i=0$ interferes with~$w_1$.
However, if $i=c$ is executed before, then $x_i=0$ interferes with~$w_0$.
Since SDPOR does not track relations between dependent statements,
it will naively try to 
reverse the race between $x_i=0$ and \emph{all writer threads}, which 
results in exploring~$\bigo{2^n}$ executions.
In this program, exploring only six traces requires understanding the 
entanglement between both interferences as the order in which the first is
reversed determines the second.

As a trade-off solution between solving this NP-complete problem and 
potentially explore an exponential number of redundant schedules, 
we propose a hybrid approach called Quasi-Optimal POR (QPOR) which 
can turn a non-optimal DPOR into an optimal one.
In particular, we provide a polynomial algorithm to compute alternative
executions that can arbitrarily approximate the optimal solution
based on a user specified constant~$k$.
The key concept is a new notion of \emph{$k$-partial alternative}, 
which can intuitively be seen as a ``good enough'' alternative:
they revert two interfering statements while remembering the 
resolution of the last~$k-1$ interferences.

The major differences between QPOR and the DPORs of~\cite{AAJS14} are that:
1) QPOR is based on prime event structures~\cite{NPW80}, a partial-order
semantics that has been recently applied to
programs~\cite{RSSK15,SRDK17}, instead of a sequential view to thread
interleaving, and
2) it computes~$k$-partial alternatives with an $\bigo{n^k}$ algorithm
while optimal DPOR corresponds to computing $\infty$-partial 
alternatives with an~$\bigo{2^n}$ algorithm.
\remove{
Similar to SDPOR, QPOR is a loop that
(1) executes the program until the end and transforms the execution into a
dependency graph,
(2) finds the next dependency graph to explore by computing the $k$-partial 
alternative to the current dependency graph, \ie, a set of events in the event 
structure that conflicts with at least~$k$ events in the dependency graph, and
(3) re-executes the program following the new dependency graph found in (2).

Computing $k$-partial alternatives requires only $\bigo{n^k}$ time.
%
The original DPOR algorithm~\cite{FG05} and~SDPOR correspond to (a simplified 
version of) a POR that employs $1$-partial alternatives, \ie, at any given time 
in the exploration their scope is restricted to one race.

Optimal POR corresponds to $\infty$-partial alternatives.
Computing $\infty$-partial alternatives is NP-complete,
but computing $k$-partial alternatives, with $k \in \N$
requires $\bigo{n^k}$ time.
}
For the program shown in~\cref{fig:ssbs}~(a), QPOR achieves optimality
with~$k=2$ because races are coupled with (at most) another race.
As expected, the cost of computing~$k$-partial alternatives and the reductions 
obtained by the method increase with higher values of~$k$.

Finding $k$-partial alternatives requires decision procedures 
for traversing the causality and conflict relations in event structures.
Our main algorithmic contribution is to represent these relations as a set of 
trees where events are encoded as one or two nodes in two different trees.
We show that checking causality/conflict between events amounts 
to an efficient traversal in one of these trees.

In summary, our main contributions are:
\begin{itemize}[noitemsep,topsep=0pt]
\item
  Proof that computing alternatives for optimal DPOR is NP-complete~(\cref{sec:complexity}).
\item
  Efficient data structures and algorithms for
  (1) computing $k$-partial alternatives in polynomial time, and
  (2) represent and traverse partial orders~(\cref{sec:algo}).
\item 
  Implementation of QPOR in a new tool called \dpu and 
  experimental evaluations against SDPOR in~\nidhugg 
  and the testing tool~\maple~(\cref{sec:exp}).

\item
  Benchmarks with $\bigo{n}$ M-traces where SDPOR
  explores~$\bigo{2^n}$ executions (\cref{sec:exp}).
\end{itemize}

Furthermore, in~\cref{sec:exp} we show that:
(1) low values of~$k$ often achieve optimality; 
(2) even with non-optimal explorations \dpu greatly outperforms \nidhugg;
(3) \dpu copes with production code in Debian packages and achieves much higher
state space coverage and efficiency than~\maple.

Proofs for all our formal results are available in the appendix
of this manuscript.

\section{Preliminaries}
\label{sec:prelim}

In this section we provide the formal background used throughout the paper.

\paragraph{Concurrent Programs.}

We consider deterministic concurrent programs composed of a fixed number of
threads that communicate via shared memory and
synchronize using mutexes
(\cref{fig:ssbs}~(a) can be trivially modified to satisfy this). We also assume that
local statements can only modify shared memory within a mutex block.  Therefore, it
suffices to only consider races of mutex accesses.

Formally, a \emph{concurrent program} is a structure
$P \eqdef \tup{\mem, \locks, T, m_0, l_0}$, where
$\mem$~is the set of~\emph{memory states}
(valuations of program variables, including instruction pointers),
$\locks$~is the set of~\emph{mutexes},
$m_0$~is the~\emph{initial memory state}, 
$l_0$~is the~\emph{initial mutexes state} and
$T$ 
is the set of \emph{thread statements}.
A thread statement $t \eqdef \tup{i,f}$ is a pair where~$i \in \nat$
is the~\emph{thread identifier}
associated with the statement and~$f \colon \mem \to (\mem \times \act)$
is a \emph{partial} function that models the transformation of the memory as
well as the \emph{effect}
$\act \eqdef \set \local \cup (\set{\lock,\unlock} \times \locks)$
of the statement with respect to thread synchronization.
Statements of~$\local$ effect model local thread code.
Statements associated with $\tup{\lock,x}$ or $\tup{\unlock,x}$ model lock and 
unlock operations on a mutex~$x$.
Finally, we assume that
(1) functions~$f$ are PTIME-decidable;
(2) $\lock$/$\unlock$ statements do not modify the memory; and
(3) $\local$ statements modify thread-shared memory only
within lock/unlock blocks.
When (3) is violated, then~$P$ has a \emph{datarace}
(undefined behavior in almost all languages),
and our technique can be used to find such statements, see~\cref{sec:exp}.

We use \emph{labelled transition systems} (\lts) semantics for our programs.
We associate a program~$P$ 
with the \lts~$M_P \eqdef \tup{\stat, {\to}, A, s_0}$.
The set $\stat \eqdef \mem \times (\locks \to \set{0,1})$ are the
\emph{states} of~$M_P$, \ie, pairs of the form~$\tup{m,v}$
where~$m$ is the state of the memory and~$v$ indicates when a mutex is
locked~(1) or unlocked~(0).
The \emph{actions} in~$A \subseteq \nat \times \act$ are pairs~$\tup{i,b}$
where~$i$ is the identifier of the thread that executes some statement and~$b$
is the effect of the statement.
We use the function~$p \colon A \to \nat$ to retrieve the thread identifier.
The \emph{transition relation}~${\to} \subseteq \stat \times A \times \stat$
contains a triple
$\tup{m,v} \fire{\tup{i,b}} \tup{m',v'}$
exactly when there is some thread statement $\tup{i,f} \in T$ such
that~$f(m) = \tup{m',b}$ and either
(1) $b = \local$ and $v' = v$, or
(2) $b = \tup{\lock,x}$ and $v(x) = 0$ and $v' = v_{|x\mapsto 1}$, or
(3) $b = \tup{\unlock,x}$ and $v' = v_{|x\mapsto 0}$.
Notation $f_{x\mapsto y}$ denotes a function that behaves like~$f$ for all
inputs except for~$x$, where $f(x) = y$.
The \emph{initial state} is~$s_0 \eqdef \tup{m_0,l_0}$.

Furthermore,
if $s \fire{a} s'$ is a transition, 
the action~$a$ is \emph{enabled} at~$s$.
Let~$\enabl s$ denote the set of actions enabled at~$s$.
A sequence $\sigma \eqdef a_1 \ldots a_n \in A^*$ is a \emph{run} when
there are states $s_1, \ldots, s_n$ satisfying
$s_0 \fire{a_1} s_1 \ldots \fire{a_n} s_n$.
We define $\statee \sigma \eqdef s_n$.
We let $\runs{M_P}$ denote the set of all runs 
and
$\reach{M_P} \eqdef \set{\statee \sigma \in \stat \colon \sigma \in \runs{M_P}}$
the set of all \emph{reachable states}.

\paragraph{Independence.}

Dynamic partial-order reduction methods use a notion called independence to
avoid exploring concurrent interleavings that lead to the same state.
We recall the standard notion of independence for actions in~\cite{God96}.
Two actions~$a, a' \in A$ \emph{commute at} a state~$s \in \stat$ iff
\begin{itemize}[noitemsep,topsep=0pt]
\item
  if $a \in \enabl s$ and $s \fire a s'$,
  then $a' \in \enabl s$ iff $a' \in \enabl{s'}$; and
\item
  if $a, a' \in \enabl s$,
  then there is a state $s'$ such that
  $s \fire{a.a'} s'$ and
  $s \fire{a'.a} s'$.
\end{itemize}
Independence between actions is an under-approximation of commutativity.
A binary relation ${\indep} \subseteq A \times A$ is an
\emph{independence} on~$M_P$ if it is symmetric, irreflexive, and 
every pair $\tup{a, a'}$ in $\indep$ commutes at every state in $\reach{M_P}$.

In general $M_P$ has multiple independence relations,
clearly $\emptyset$ is always one of them.
We define
relation~${\indep_P} \subseteq A \times A$ as the smallest irreflexive, symmetric
relation where
$\tup{i,b} \indep_P \tup{i',b'}$ holds if $i \neq i'$ and either
$b = \local$ or
$b = \lock\ x$ and $b' \not \in \set {\lock\ x, \unlock\ x}$.
By construction~$\indep_P$ is always an independence.

\paragraph{Labelled Prime Event Structures.}
\emph{Prime event structures} (\pes) are well-known non-interleaving, partial-order
semantics~\cite{NPW81,EH08,Esp10spin}.
\iffull
Standard dynamic partial-order reduction methods~\cite{FG05,AAJS14} are based 
on the interleaving semantics of the associated \lts~$M_P$, 
\ie, they reason over executions of the program.
QPOR will operate over a non-interleaving, partial-order based semantics,
known as prime event structures.
\fi
Let $X$ be a set of actions.
A \emph{\pes over~$X$} is a structure $\les \eqdef \tup{E, <, \cfl, h}$ where
$E$ is a set of \emph{events},
${<} \subseteq E \times E$ is a strict partial order
called \emph{causality relation},
${\cfl} \subseteq E \times E$ is a symmetric, irreflexive
\emph{conflict relation},
and $h \colon E \to X$ is a labelling function.
Causality represents the happens-before relation between events,
and conflict between two events expresses that any execution includes at most
one of them.
\cref{fig:exunf}~(b) shows a \pes over $\nat \times \act$
where causality is depicted by arrows,
conflicts by dotted lines, and the labelling~$h$ is shown next to the events,
\eg, $1 < 5$, $8 < 12$, $2 \cfl 8$, and $h(1) = \tup{0,\local}$.
The \emph{history} of an event $e$,
$\causes e \eqdef \set{e' \in E \colon e' < e}$,
is the least set of events that need to happen before~$e$.

The notion of concurrent execution in a~$\pes$ is captured by the concept of~\emph{configuration}.
A configuration is a (partially ordered) execution of the system, \ie,
a set~$C \subseteq E$
of events that is \emph{causally closed}
(if $e \in C$, then $\causes e \subseteq C$)
and \emph{conflict-free}
(if $e, e' \in C$, then $\lnot (e \cfl e')$).
In \cref{fig:exunf}~(b), the set $\set{8,9,15}$ is a configuration,
but $\set{3}$ or $\set{1,2,8}$ are not.
We let $\conf \les$ denote the set of all configurations of~$\les$, and
$[e] \eqdef \causes e \cup \set e$ the \emph{local configuration} of~$e$.
In \cref{fig:exunf}~(b), $[11] = \set{1,8,9,10,11}$.
A configuration represents a set of \emph{interleavings} over~$X$.
An interleaving is a sequence in~$X^*$ that labels any topological sorting of
the events in~$C$.
We denote by $\inter C$ the set of interleavings of~$C$.
In \cref{fig:exunf}~(b),
$\inter{\set{1,8}} = \set{ab, ba}$ with
$a \eqdef \tup{0,\local}$ and
$b \eqdef \tup{1,\lock\ m}$.

The \emph{extensions} of $C$ 
are the events not in~$C$ whose histories are included in~$C$:
$\ex C \eqdef \set{e \in E \colon e \notin C \land \causes e \subseteq C}$.
The \emph{enabled} events of~$C$ are the extensions that can form a
larger configuration:
$\en C \eqdef \set{e \in \ex C \colon C \cup \set e \in \conf{\les}}$.
%
Finally, the \emph{conflicting extensions} of~$C$ are the extensions that are
not enabled: $\cex C \eqdef \ex C \setminus \en C$.
In \cref{fig:exunf}~(b),
$\ex{\set{1,8}} = \set{2,9,15}$,
$\en{\set{1,8}} = \set{9,15}$, and
$\cex{\set{1,8}} = \set{2}$.
See \cite{RSSK15long} for more information on \pes concepts.

\paragraph{Parametric Unfolding Semantics.}

We recall the program \pes semantics of~\cite{RSSK15,RSSK15long}
(modulo notation differences).
For a program $P$ and any independence~$\indep$ on~$M_P$ we define a
\pes~$\unf{P,\indep}$
that represents the behavior of~$P$, \ie, such that
the interleavings of its set of configurations equals $\runs{M_P}$.
%

Each event in~$\unf{P,\indep}$ is defined by a canonical name
of the form $e \eqdef \tup{a,H}$, where $a \in A$ is an action of~$M_P$
and~$H$ is a configuration of $\unf{P,\indep}$.
Intuitively, $e$ represents the action~$a$
after the \emph{history} (or the causes) $H$.
\cref{fig:exunf}~(b) shows an example.
Event~11 is $\tup{\tup{0, \lock\ m},\set{1,8,9,10}}$ and
event~1 is $\tup{\tup{0, \local}, \emptyset}$.
Note the inductive nature of the name, and how it
allows to uniquely identify each event.
We define the \emph{state of a configuration} as the state reached by \emph{any}
of its interleavings.
Formally, for $C \in \conf{\unf{P,\indep}}$ we define
$\statee C$ as $s_0$ if $C = \emptyset$ and
as $\statee \sigma$ for some $\sigma \in \inter C$ if $C \ne \emptyset$.
Despite its appearance $\statee C$ is well-defined because \emph{all} sequences
in $\inter C$ reach the \emph{same} state, see~\cite{RSSK15long} for a proof.
\begin{definition}[Unfolding]
\label[definition]{def:unfd}
Given a program~$P$ and some independence relation $\indep$
on $M_P \eqdef \tup{\stat, \to, A, s_0}$, the
\emph{unfolding of~$P$ under~$\indep$}, denoted $\unf{P,\indep}$,
is the \pes over~$A$ constructed by the following fixpoint rules:
\begin{enumerate}[topsep=0pt]
\item
  Start with a \pes $\les \eqdef \tup{E, <, {\cfl}, h}$
  equal to $\tup{\emptyset, \emptyset, \emptyset, \emptyset}$.
\item
  Add a new event $e \eqdef \tup{a,C}$ to~$E$ for any
  configuration $C \in \conf \les$ and any action $a \in A$ if
  $a$~is enabled at $\statee C$ and
  $\lnot (a \indep h(e'))$ holds for every $<$-maximal event $e'$ in~$C$.
\item
  For any new $e$ in $E$, update $<$, $\cfl$, and $h$ as follows:
    for every $e' \in C$, set $e' < e$;
    for any $e' \in E \setminus C$,
    set $e' \cfl e$
    if $e \ne e'$ and $\lnot (a \indep h(e'))$;
    set $h(e) \eqdef a$.
\item
  Repeat steps 2 and 3 until no new event can be added to~$E$;
  return $\les$.
\end{enumerate}
\end{definition}
Step 1 creates an empty \pes with only one (empty) configuration.
Step 2 inserts a new event $\tup{a,C}$ by finding a configuration~$C$ that
enables an action~$a$ which is dependent with all causality-maximal events
in~$C$.
In \cref{fig:exunf},
this initially creates events~1, 8, and 15.
For event $1 \eqdef \tup{\tup{0,\local}, \emptyset}$, this is because
action $\tup{0,\local}$ is enabled at $\statee \emptyset = s_0$
and there is no $<$-maximal event in~$\emptyset$ to consider.
Similarly, the state of
$C_1 \eqdef \set{1, 8, 9, 10}$ enables action
$a_1 \eqdef \tup{0, \lock\ m}$, and both~$h(1)$ and~$h(10)$ are
dependent with~$a_1$ in~$\indep_P$.
As a result $\tup{a_1,C_1}$ is an event (number 11).
Furthermore, while $a_2 \eqdef \tup{0, \local}$ is enabled at $\statee{C_2}$,
with $C_2 \eqdef \set{8,9,10}$,
$a_2$ is independent of $h(10)$ and $\tup{a_2,C_2}$ is not an event.

After inserting an event~$e \eqdef \tup{a,C}$, \cref{def:unfd} declares
all events in~$C$ causal predecessors of~$e$.
For any event~$e'$ in $E$ but not in $[e]$ such that~$h(e')$ is dependent
with~$a$, the order of execution of~$e$ and~$e'$ yields different states.
We thus
set them in conflict.
In \cref{fig:exunf},
we set $2 \cfl 8$ because
$h(2)$ is dependent with~$h(8)$ and
$2 \notin [8]$ and $8 \notin [2]$.

\begin{figure}[t]
\centering
\hspace{-5pt}
\begin{minipage}[t]{5.4cm}
\footnotesize

\itshape{Thread 0:\qquad Thread 1:\quad Thread 2:}
\begin{verbnobox}[\ttfamily]
x := 0      lock(m)    lock(m')
lock(m)     y := 1     z := 3
if (y == 0) unlock(m)  unlock(m')
  unlock(m)
else
  lock(m')
  z := 2
\end{verbnobox}
\normalfont (a)
\end{minipage}
\begin{minipage}[t]{6cm}
\raisebox{-3.75cm}{
	\scalebox{0.6}{
\begin{tikzpicture}
\begin{scope}
[node distance=1cm, auto, every node/.style={rectangle, draw, text centered, minimum size=5mm}, 
every path/.style={->,>=stealth'},
plain/.style={draw=none, text centered}]
\normalsize

\node at (0,1) [label=left:{$\tup{0,\local}$}] (1) {$1$};
\node [below of=1,label=left:{$\tup{0,\lock\ m}$}] (2) {$2$};
\node [below of=2,label=left:{$\tup{0,\local}$}] (3) {$3$};
\node [below of=3,label=left:{$\tup{0,\unlock\ m}$}] (4) {$4$};
\node at (1,-2.75) [label=left:{$\tup{1,\lock\ m}$}] (5) {$5$};
\node [below of=5, label=left:{$\tup{1,\local}$}] (6) {$6$};
\node [below of=6, label=left:{$\tup{1, \unlock\ m}$}] (7) {$7$};
\draw (1) -- (2);
\draw (2) -- (3);
\draw (3) -- (4);
\draw (4) -- (5);
\draw (5) -- (6);
\draw (6) -- (7);

\node at (3.25,1) [label=right:{$\tup{1,\lock\ m}$}] (8) {$8$};
\node [below of=8,label=right:{$\tup{1,\local}$}] (9) {$9$};
\node [below of=9,label=right:{$\tup{1,\unlock\ m}$} ] (10) {$10$};
\node at (2.25,-1.75) [label=right:{$\tup{0,\lock\ m}$}] (11) {$11$};
\node [below of=11,label=right:{$\tup{0,\local}$}] (12) {$12$};
\node [below of=12, label=right:{$\tup{0,\lock\ m'}$}] (13) {$13$};
\node [below of=13, label=right:{$\tup{0,\local}$}] (14) {$14$};
\draw (8) -- (9);
\draw (9) -- (10);
\draw (1) -- (11);
\draw (10) -- (11);
\draw (11) -- (12);
\draw (12) -- (13);
\draw (13) -- (14);

\node at (6,1) [label=right:{$\tup{2,\lock\ m'}$}] (15) {$15$};
\node at (6,-1) [label=right:{$\tup{2,\local}$}] (16) {$16$};
\node [below of=16,label=right:{$\tup{2,\unlock\ m'}$} ] (17) {$17$};
\node at (5.5,-3.75) [label=right:{$\tup{0,\lock\ m'}$}] (18) {$18$};
\node [below of=18,label=right:{$\tup{0,\local}$}] (19) {$19$};
\draw (15) -- (16);
\draw (16) -- (17);
\draw (12) -- (18);
\draw (17) -- (18);
\draw (18) -- (19);
\end{scope}

\node at (-1.75,-5) [draw=none]	(b)		{\Large (b)};

\begin{scope}
\path[dashed, thick, color=red] (2) edge (8);
\path[dashed, thick, color=red] (13) edge[bend right=27] (15);
\end{scope}
\end{tikzpicture}}
}
\end{minipage}
\caption{(a): a program $P$; (b): its unfolding semantics $\unf{P,\indep_P}$.}
\label{fig:exunf}
\end{figure}

\section{Unfolding-Based DPOR}
\label{sec:por}

\SetKwProg{Proc}{Procedure}{}{}
\SetKwProg{Fn}{Function}{}{}
\SetKwIF{If}{ElseIf}{Else}{if}{}{else if}{else}{end}
\SetKwFor{ForEach}{foreach}{}{end}
\SetKwFor{For}{for}{}{end}
\SetKwInOut{Input}{Input}
\SetKwInOut{Output}{Output}
\SetKw{Continue}{continue}
\SetKw{Break}{break}
\SetInd{0.2em}{1em}
\SetNlSty{}{\color{gray}}{}
\SetVlineSkip{2pt}

\SetKwFunction{explore}{Explore}
\SetKwFunction{alternatives}{Alt}
\SetKwFunction{exten}{ex}
\SetKwFunction{cexp}{cexp}
\SetKwFunction{pt}{pt}
\SetKwFunction{pm}{pm}
\SetKwFunction{remove}{Remove}

This section presents an
algorithm that exhaustively explores all deadlock states of a given
program (a \emph{deadlock} is a state where no thread is enabled).

For the rest of the paper, unless otherwise stated, we let $P$ be a
\emph{terminating} program
(\ie, $\runs{M_P}$ is a finite set of finite sequences) and
$\indep$ an independence on~$M_P$.
Consequently, $\unf{P,\indep}$ has finitely many events and configurations.

Our POR algorithm (\cref{a:a1}) analyzes~$P$ by exploring the configurations
of~$\unf{P,\indep}$.
It visits all $\subseteq$-maximal configurations of~$\unf{P,\indep}$,
which correspond to the deadlock states in~$\reach{M_P}$,
and organizes the exploration as a binary tree.

\explore{$C,D,A$} has a global set~$U$ that stores
all events of~$\unf{P,\indep}$ discovered so far.
The three arguments are:
$C$, the configuration to be explored;
$D$ (for \emph{disabled}), a set of events that shall never be visited (included
in~$C$) again; and
$A$ (for \emph{add}), used to direct the exploration towards a configuration
that conflicts with~$D$.
A call to \explore{$C,D,A$} visits all maximal configurations of~$\unf{P,\indep}$
which contain~$C$ and do not contain~$D$, and the first one explored
contains~$C \cup A$.

\begin{algorithm}[t]
\DontPrintSemicolon
\setlength{\columnsep}{2pt}
\begin{multicols}{2}
   Initially, set $U \eqdef \emptyset$,\\
   and call \explore{$\emptyset$, $\emptyset$, $\emptyset$}. \;

   \BlankLine

   \Proc{\explore{$C, D, A$}}{
      Add $\ex C$ to $U$ \; \label{l:ext}

      \lIf{$\en C \subseteq D$}{ \KwRet } \label{l:ret}
      
      \eIf{$A = \emptyset$}
      {
         Choose $e$ from $\en C \setminus D$ \; \label{l:a2choose}
      }
      {
         Choose $e$ from $A \cap \en C$ \; \label{l:a1choose}
      }

      \explore{$C \cup \set e, D, A \setminus \set e$} \;
      \label{l:left}
      \uIf{$\exists J \in \alternatives{$C, D \cup \set e$}$ \label{l:alt}} 
      {
         \explore{$C, D \cup \set e, J \setminus C$}
         \label{l:right}
      }
      $U \eqdef U \cap Q_{C,D}$ \; \label{l:a1remove}
   } 

\columnbreak
\Fn{\cexp{C}}{
   $R \eqdef \emptyset$ \;
   \ForEach {event $e \in C$ of type $\lock$}
   {
      $e_t \eqdef \pt(e)$ \;
      $e_m \eqdef \pm(e)$ \;
      \While {$\lnot (e_m \leq e_t)$ \label{l:lcau}}  
      { 
         $e_m \eqdef \pm(e_m)$ \label{l:em} \;
         \lIf {$(e_m < e_t)$}{\Break}
         $e_m \eqdef \pm(e_m)$ \;
         $\hat e \eqdef \tup{h(e), [e_t] \cup [e_m]}$ \;
         \label{l:hate}
         Add $\hat e$ to $R$
      }
   }
   \KwRet{R}   
}
\end{multicols}
\vspace*{6pt}
\caption{Unfolding-based POR exploration. See text for definitions.}
\label{a:a1}
\end{algorithm}

The algorithm first adds~$\ex C$ to~$U$.
If~$C$ is a maximal configuration (\ie, there is no enabled event) then 
\cref{l:ret} returns.
If~$C$ is not maximal but $\en C \subseteq D$, then all possible events that
could be added to~$C$ have already been explored and this call was redundant
work.
In this case the algorithm also returns and we say that it has explored
a~\emph{sleep-set blocked} (SSB) execution~\cite{AAJS14}.
\cref{a:a1} next selects an event enabled at~$C$, if possible from~$A$
(\cref{l:a2choose,l:a1choose}) and
makes a recursive call (left subtree) that explores
\emph{all} configurations that contain all events in $C \cup \set e$ and
no event from~$D$.
Since that call visits all maximal configurations 
containing~$C$ and~$e$, it remains to visit those 
containing~$C$ but not~$e$.
At \cref{l:alt} we determine if any such configuration exists.
Function \alternatives returns a set of configurations, so-called \emph{clues}.
A clue is a witness that a $\subseteq$-maximal configuration exists
in~$\unf{P,\indep}$ which contains $C$ and not $D \cup \set{e}$. 

\begin{definition}[Clue]
\label[definition]{def:clue}
Let $D$ and $U$ be sets of events, and
$C$ a configuration such that $C \cap D = \emptyset$.
A \emph{clue} to~$D$ after~$C$ 
in~$U$ is a configuration~$J \subseteq U$ 
such that $C \cup J$ is a configuration and $D \cap J = \emptyset$.
\end{definition}

\begin{definition}[\texttt{Alt} function]
\label[definition]{def:alt-fun}
Function \alternatives denotes \emph{any} function such that
\alternatives{$B,F$} returns a set of clues to~$F$ after~$B$ in~$U$, and
the set is non-empty if
$\unf{P,\indep}$ has at least one maximal configuration~$C$
where $B \subseteq C$ and $C \cap F = \emptyset$.
\end{definition}

When \alternatives returns a clue $J$, the clue is passed in the second
recursive call (\cref{l:right}) to ``mark the way'' (using set~$A$)
in the subsequent recursive calls at \cref{l:left}, and guide the exploration
towards the maximal configuration that $J$ witnesses.
\cref{def:alt-fun} does not identify a concrete implementation of \alternatives.
It rather indicates how to implement \alternatives so that~\cref{a:a1}
terminates and is complete (see below).
Different PORs in the literature can be reframed in terms of~\cref{a:a1}.
SDPOR~\cite{AAJS14} uses clues that mark the way with only one event ahead ($|J
\setminus C| = 1$) and can hit SSBs.
Optimal DPORs~\cite{AAJS14,RSSK15} use size-varying clues that guide the
exploration provably guaranteeing that any SSB will be avoided.

\cref{a:a1} is \emph{optimal} when it does not explore a SSB.
To make \cref{a:a1} optimal \alternatives needs to return clues that are
\emph{alternatives}~\cite{RSSK15}, which satisfy stronger constraints.
When that happens, \cref{a:a1} is equivalent to the DPOR
in~\cite{RSSK15} and becomes optimal (see \cite{RSSK15long} for a proof).

\begin{definition}[Alternative~\cite{RSSK15}]
\label[definition]{def:alternative}
Let~$D$ and~$U$ be sets of events and~$C$ a configuration such that
$C \cap D = \emptyset$.
An \emph{alternative} to~$D$ after~$C$ in~$U$ is a clue~$J$ to~$D$ after~$C$
in~$U$ such that $\forall e \in D: \exists e' \in J$, $e \cfl e'$.
\end{definition}

\Cref{l:a1remove} removes from~$U$ events that will not
be necessary for~\alternatives to find clues in the future.
The events preserved, $Q_{C,D} \eqdef C \cup D \cup \fcfl{C \cup D}$,
include all events in~$C \cup D$ as well as every event in~$U$ that
is in conflict with some event in~$C \cup D$.
The preserved events will suffice to compute alternatives~\cite{RSSK15},
but other non-optimal implementations of~\alternatives could allow
for more aggressive pruning.

The $\subseteq$-maximal configurations of \cref{fig:exunf}~(b) are
$[7] \cup [17]$,
$[14]$, and
$[19]$.
Our algorithm starts at configuration $C = \emptyset$.
After~10 recursive calls it visits $C = [7] \cup [17]$.
Then it backtracks to~$C = \set{1}$, calls \alternatives{$\set{1}, \set{2}$},
which provides, \eg, $J = \set{1,8}$, and
visits~$C = \set{1,8}$ with~$D = \set{2}$.
After~6 more recursive calls it visits~$C = [14]$,
backtracks to $C = [12]$,
calls \alternatives{$[12], \set{2,13}$}, which provides, \eg,
$J = \set{15}$, and after two more recursive calls it
visits~$C = [12] \cup \set{15}$ with~$D = \set{2,13}$.
Finally, after~4 more recursive calls it visits~$C = [19]$.

Finally, we focus on the correctness of~\cref{a:a1}, and prove termination and
soundness of the algorithm:

\begin{restatable}[Termination]{theorem}{thmtermination}
\label{thm:ter}
Regardless of its input, \cref{a:a1} always stops.
\end{restatable}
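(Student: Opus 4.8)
The plan is to prove that the tree of recursive invocations of \explore is finite. By construction it is a binary tree (each call spawns the call at \cref{l:left} and, conditionally, the one at \cref{l:right}), so it suffices to find a quantity that is bounded from above and strictly increases along every parent--child edge of this tree; since each individual call does a bounded amount of work, this gives termination.

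Fix $\les \eqdef \unf{P,\indep}$ and let $E$ be its event set; because $P$ is terminating, $E$ is finite. First I would establish, by induction on the recursion, that every call \explore{$C,D,A$} satisfies the invariants: (i)~$C \in \conf\les$; (ii)~$C \cup D \cup A \subseteq E$; (iii)~$C \cap D = \emptyset$; (iv)~$A \cap D = \emptyset$. The top-level call $(\emptyset,\emptyset,\emptyset)$ satisfies all of them trivially. For the inductive step I would inspect the two recursive calls. The selected event $e$ is taken from $\en C \setminus D$ (when $A = \emptyset$, this set being non-empty since the test at \cref{l:ret} did not fire) or from $A \cap \en C$ (when $A \neq \emptyset$); in both cases $e \in \en C$, hence $e \notin C$, $C \cup \set e \in \conf\les$, and $e \in E$, while invariant~(iv) additionally gives $e \notin D$. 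Thus the left call \explore{$C \cup \set e, D, A \setminus \set e$} again satisfies (i)--(iv). For the right call \explore{$C, D \cup \set e, J \setminus C$} the configuration is unchanged, and since $J$ is a clue to $D \cup \set e$ after $C$ in $U$, \cref{def:clue} yields $(D \cup \set e) \cap J = \emptyset$ and $J \subseteq U \subseteq E$; combined with $e \notin C$ this re-establishes (i)--(iv).

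Next I would use the measure $\mu(C,D) \eqdef |C \cup D|$. In each of the two recursive calls the new value of $C \cup D$ is exactly the old $C \cup D$ together with the event $e$, which by the argument above does not already belong to $C \cup D$; hence $\mu$ strictly increases (by one) at every recursive call. By invariant~(ii) we have $C \cup D \subseteq E$ with $E$ finite, so $\mu(C,D) \le |E|$ throughout, and therefore every branch of the recursion tree has length at most $|E|$. Since \explore issues at most two recursive calls and otherwise performs only a bounded amount of computation on the finite structure $\les$ (evaluating $\ex C$, $\en C$, the set $Q_{C,D}$, and one call to \alternatives, each of which terminates), the call tree is a finite binary tree and is traversed in finite time; as \cref{a:a1} is launched by the single call \explore{$\emptyset,\emptyset,\emptyset$}, it always stops.

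The one genuinely delicate ingredient is invariant~(iv), $A \cap D = \emptyset$: it is precisely what prevents the event $e$ picked from $A$ at \cref{l:a1choose} from already lying in $D$, and hence what makes $\mu$ strictly increase on the left branch; its preservation across the right branch relies entirely on the returned clue being disjoint from the second argument of \alternatives, i.e.\ on \cref{def:clue}. Everything else---the other invariants, the early returns at \cref{l:ret} (which only create leaves), and the well-definedness of the selections---is routine bookkeeping, the latter being part of the soundness analysis of the algorithm.
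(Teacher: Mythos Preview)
Your proof is correct, and it takes a somewhat different route from the paper's.

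Both arguments show that every root-to-leaf path in the call tree has bounded length, but they bound it differently. The paper proceeds in two stages: first, any path contains only finitely many left-child steps, since each such step strictly enlarges~$C$ inside the finite~$\les$; hence an infinite path would have an infinite suffix of right-child steps, along which~$C$ is constant and~$D$ strictly grows; but the paper maintains the invariant $D \subseteq \ex C$, which bounds~$D$ by a finite set, yielding a contradiction. Your argument is more economical: a single measure $|C \cup D|$ strictly increases along \emph{every} edge (left or right) and is globally bounded by~$|E|$. The price is that you must maintain both $C \cap D = \emptyset$ and $A \cap D = \emptyset$ explicitly (the paper gets the former as a corollary of the stronger $D \subseteq \ex C$), but you do this cleanly and your identification of invariant~(iv) as the crux is spot on. Your route is arguably more direct; the paper's has the side benefit that $D \subseteq \ex C$ is reused in the completeness argument.
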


\begin{restatable}[Completeness]{theorem}{thmcompleteness}
\label{thm:sou}
Let $\hat C$ be a $\subseteq$-maximal configuration of $\unf{P,\indep}$.
Then \cref{a:a1} calls \explore{$C,D,A$} at least once with $C=\hat C$.
\end{restatable}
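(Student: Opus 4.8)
The plan is to establish, by induction on the (finite, by~\cref{thm:ter}) tree of recursive calls generated by~\cref{a:a1}, a statement stronger than the theorem: for every call \explore{$C,D,A$} performed during the run and every $\subseteq$-maximal configuration $\hat C$ of $\unf{P,\indep}$ with $C \subseteq \hat C$ and $\hat C \cap D = \emptyset$ (call such a $\hat C$ \emph{admissible} for that call), the subtree of recursive calls rooted at \explore{$C,D,A$} contains some call whose first argument is exactly $\hat C$. Since every $\subseteq$-maximal configuration is admissible for the root call \explore{$\emptyset,\emptyset,\emptyset$} (here $C=D=\emptyset$), this immediately yields the theorem. I would prove the strengthening from the leaves of the call tree upwards.

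Consider first a leaf of the call tree --- a call that makes no recursive call and therefore returned at~\cref{l:ret}, so $\en C \subseteq D$. I claim any admissible $\hat C$ must then equal $C$: otherwise $C \subsetneq \hat C$, and a $<$-minimal event $e$ of $\hat C\setminus C$ satisfies $\causes e \subseteq C$ by causal closure of $\hat C$, hence $C \cup \set e \in \conf{\unf{P,\indep}}$ (it is causally closed, and conflict-free as a subset of $\hat C$), hence $e \in \en C \subseteq D$, contradicting $e \in \hat C$ and $\hat C \cap D = \emptyset$. So either no $\hat C$ is admissible --- the sleep-set-blocked case, where there is nothing to prove --- or $\hat C = C$ and the leaf call itself is the witness. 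Now consider an internal call: the test at~\cref{l:ret} fails, so $\en C \neq \emptyset$, $C$ is not maximal, and any admissible $\hat C$ strictly contains $C$; the algorithm then picks some $e \in \en C$ (from $A \cap \en C$ at~\cref{l:a1choose} when $A \neq \emptyset$, otherwise from $\en C \setminus D$) and unconditionally makes the left call \explore{$C\cup\set e,\,D,\,A\setminus\set e$}. I split on $e$. If $e \in \hat C$, then $C\cup\set e$ is a configuration (as $e\in\en C$) with $C\cup\set e \subseteq \hat C$ and $\hat C \cap D = \emptyset$, so $\hat C$ is admissible for the left call and the induction hypothesis supplies the required call. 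If $e \notin \hat C$, then $\hat C$ is a $\subseteq$-maximal configuration containing $C$ and disjoint from $D\cup\set e$; by~\cref{def:alt-fun} the set \alternatives{$C,\,D\cup\set e$} is therefore non-empty, so the test at~\cref{l:alt} succeeds and the right call \explore{$C,\,D\cup\set e,\,J\setminus C$} is made; since $\hat C \cap (D\cup\set e) = \emptyset$, $\hat C$ is admissible for this call as well, and the induction hypothesis again supplies the required call.

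The step I expect to require the most care is fixing the precise form of the induction hypothesis. A first attempt would also carry an invariant relating the third argument $A$ to $\hat C$, on the grounds that when $A \neq \emptyset$ the algorithm is \emph{forced} to pick $e$ from $A \cap \en C$; but no such invariant holds, because the clue $J$ returned by \alternatives becomes the next $A$ and need not be a subset of the target $\hat C$. The way around this is that $A$ only constrains \emph{which} enabled event is chosen, and the dichotomy ``$e\in\hat C$'' / ``$e\notin\hat C$'' already covers whatever the algorithm decides, so $A$ need not appear in the hypothesis at all. The remaining ingredients are routine: the elementary \pes facts used above (a $<$-minimal event of $\hat C\setminus C$ belongs to $\en C$; $e\in\en C$ implies $C\cup\set e\in\conf{\unf{P,\indep}}$); the preservation of admissibility across both recursive calls, checked in the previous paragraph; and the well-foundedness of the induction, which is free from~\cref{thm:ter} (alternatively one may induct lexicographically on $|\hat C\setminus C|$ and then on the number of events of $\unf{P,\indep}$ not yet in~$D$, using the easy invariant $A\cap D=\emptyset$ --- immediate from~\cref{def:clue} --- to see that the chosen $e$ is never in $D$, so that $D$ grows strictly on the right call).
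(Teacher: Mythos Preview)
Your proposal is correct and follows essentially the same approach as the paper: you carry the invariant ``$C \subseteq \hat C$ and $\hat C \cap D = \emptyset$'' along the call tree, do the case split on whether the chosen event $e$ lies in $\hat C$, and invoke the specification of \alternatives{} (\cref{def:alt-fun}) to guarantee the right child exists when $e \notin \hat C$. The paper packages this into a step lemma (one move down the tree) and then iterates it top-down to build a path from the root to a node with $C = \hat C$, whereas you phrase it as structural induction on the finite call tree from the leaves up; this is purely organisational, and your discussion of why $A$ need not appear in the induction hypothesis is a point the paper leaves implicit.
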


\section{Complexity}
\label{sec:complexity}

This section presents complexity results about the only non-trival steps
in~\cref{a:a1}: computing~$\ex C$ and the call to~\alternatives{$\cdot,\cdot$}.
An implementation of \alternatives{$B,F$} that systematically returns~$B$ would
satisfy~\cref{def:alt-fun},
but would also render~\cref{a:a1} unusable (equivalent to a DFS in~$M_P$).
On the other hand the algorithm becomes optimal when \alternatives returns
alternatives. Optimality comes at a cost:

\begin{restatable}{theorem}{pesaltnp}
\label{thm:pes-altnp}
Given a finite \pes $\les$, some configuration
$C \in \conf \les$, and a set~$D \subseteq \ex C$, deciding if an
alternative to~$D$ after~$C$ exists in~$\les$ is NP-complete.
\end{restatable}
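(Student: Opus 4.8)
The statement has two parts: membership in NP and NP-hardness. Membership is the easy half: a nondeterministic machine guesses a subset $J \subseteq \ex C$ (polynomial in $|\les|$), and then verifies in polynomial time that (i) $C \cup J$ is a configuration (check causal closure and conflict-freeness, both just traversals of $<$ and $\cfl$), (ii) $D \cap J = \emptyset$, and (iii) for every $e \in D$ there is some $e' \in J$ with $e \cfl e'$. All three checks are clearly polynomial in the size of the explicitly given \pes, so the decision problem is in NP.

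For hardness I would reduce from a standard NP-complete problem whose structure mirrors the ``hitting'' flavour of \cref{def:alternative}: every $e \in D$ must be ``covered'' by a conflicting event in $J$, while $J$ must simultaneously be conflict-free and causally closed. The natural candidate is a covering/hitting-set-style problem, but the conflict-freeness requirement on $J$ makes it look more like an independent-set constraint layered on top of a covering constraint — so I would actually aim for a reduction from something like \textsc{SAT} (or \textsc{3-SAT}), which packages both the ``choose consistently'' and ``satisfy every clause'' aspects cleanly. Given a CNF formula $\varphi$ with variables $v_1,\dots,v_n$ and clauses $c_1,\dots,c_m$, I would build a \pes in which: for each variable $v_i$ there are two events $x_i$ (meaning ``set $v_i$ true'') and $\bar x_i$ (``set $v_i$ false'') that are in conflict with each other — so any configuration picks at most one literal per variable; for each clause $c_j$ there is an event $d_j$ which we place into $D$, and we set $d_j$ in conflict precisely with those literal-events whose literal occurs in $c_j$. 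Then an alternative $J$ to $D$ exists iff one can pick a conflict-free set of literal-events (a consistent partial assignment) that conflicts with every $d_j$ (satisfies every clause), which is exactly satisfiability of $\varphi$. Causality can be kept trivial (or used only to force that the literal-events are extensions of the chosen $C$, with $C$ taken to be some common prefix), and one must check that $D \subseteq \ex C$ as the theorem requires — this is arranged by making each $d_j$ enabled after $C$ in the underlying ordering but conflicting with the literal events.

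The subtlety — and the step I expect to be the main obstacle — is that \cref{thm:pes-altnp} quantifies over \emph{arbitrary finite \pes}, but for the NP-hardness statement to be meaningful in the context of the paper (and to avoid triviality) one wants the hard instances to actually be \emph{realisable as unfoldings of programs}, i.e. of the form $\unf{P,\indep}$. A \pes that arises as a program unfolding is not arbitrary: conflicts are inherited downward-closed along causality in a specific way (two events conflict because of a dependency between their labels, and conflict propagates to causal successors), and the branching structure is constrained by the fact that it comes from interleavings of threads synchronising via mutexes. So the real work is to engineer the gadget so that the variable-choice conflicts and the clause-covering conflicts can all be produced by a genuine concurrent program: e.g. encode each variable $v_i$ by a mutex $m_i$ that two threads race to acquire (the two acquisition events being the conflicting $x_i$ / $\bar x_i$), and encode each clause by a thread that reads the relevant variables in a way that makes its critical action dependent with exactly the right literal-events. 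One then has to verify that the resulting unfolding contains the claimed events with the claimed conflict relation and no spurious extra events that would let $J$ cheat, and that the translation is polynomial. I would carry the argument out in this order: (1) NP membership; (2) describe the \pes-level gadget and prove ``alternative exists $\iff$ $\varphi$ satisfiable'' purely combinatorially; (3) realise the gadget as a concurrent program $P$ with $\indep = \indep_P$ and argue $\unf{P,\indep_P}$ has the required shape, so that the hardness holds even restricted to program unfoldings; (4) conclude NP-completeness. If the paper only needs hardness for abstract finite \pes, step (3) can be dropped, but I would include it because it is what makes the result relevant to DPOR.
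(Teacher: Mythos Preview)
Your proposal is correct and follows essentially the same approach as the paper: NP membership by guess-and-check, and NP-hardness via a 3-SAT reduction with two mutually-conflicting literal events per variable, one clause event per clause placed in~$D$ and set in conflict with its literals, trivial causality, and $C=\emptyset$. Your step~(3) is unnecessary here---the theorem is stated for arbitrary finite \pes{}, and the paper handles the restriction to program unfoldings separately as \cref{thm:prog-altnp}; also, for the NP-membership part you should guess $J \subseteq E$ rather than $J \subseteq \ex C$, since an alternative is an arbitrary configuration of~$\les$, not just a set of extensions of~$C$ (this is harmless for the hardness direction because there $C=\emptyset$).
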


\Cref{thm:pes-altnp} assumes that $\les$ is an arbitrary \pes.
Assuming that $\les$ is the unfolding of a program~$P$ under~$\indep_P$
does not reduce this complexity:

\begin{restatable}{theorem}{progaltnp}
\label{thm:prog-altnp}
Let $P$ be a program and
$U$ a causally-closed set of events from $\unf{P,\indep_P}$.
For any configuration
$C \subseteq U$ and any $D \subseteq \ex C$,
deciding if an alternative to~$D$ after~$C$ exists in~$U$
is NP-complete.
\end{restatable}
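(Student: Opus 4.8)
The plan is to establish membership in NP by a guess-and-check argument, and NP-hardness by a polynomial-time reduction from 3-SAT that realises a ``choose a truth assignment / cover every clause'' gadget inside the unfolding of a suitable program. Membership is straightforward: by \cref{def:alternative} a witnessing clue~$J$ is a subset of the \emph{input} set~$U$, hence of size polynomial in the input, and for a guessed~$J$ one checks in polynomial time that $C \cup J$ is causally closed and conflict-free, that $D \cap J = \emptyset$, and that every $e \in D$ is in conflict with some $e' \in J$. This last test is feasible because the causality and conflict relations of $\unf{P,\indep_P}$ are decidable in polynomial time on the canonical names of the events of~$U$: two events conflict iff their action labels are dependent and neither event lies in the other's history, and dependence of labels is decidable from the PTIME-decidable transition functions of~$P$.

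For hardness I would, given a 3-CNF formula~$\phi$ over variables $x_1,\dots,x_n$ with clauses $c_1,\dots,c_m$, construct in polynomial time a program~$P$ in the mutex-only model of \cref{sec:prelim}, together with a configuration~$C$, a set~$D$, and a causally-closed finite set $U \subseteq \unf{P,\indep_P}$, such that an alternative to~$D$ after~$C$ in~$U$ exists iff~$\phi$ is satisfiable. It is convenient to take $C \eqdef \emptyset$, so that $D \subseteq \ex{\emptyset}$ forces the events of~$D$ to be initial actions of threads. The construction uses two gadgets. A \emph{variable gadget} for each~$x_a$ consists of two threads that both begin by locking a private mutex~$m_a$; in the unfolding this produces two mutually conflicting events $t_a \cfl f_a$, and putting~$t_a$ (resp.\ $f_a$) into the clue encodes $x_a = \mathit{true}$ (resp.\ $\mathit{false}$), while conflict-freeness of the clue rules out a contradictory assignment. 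A \emph{clause gadget} for each~$c_j$ contributes one event $d_j \in D$, a lock on a private mutex~$n_j$ with empty history, and, for each literal of~$c_j$ over some~$x_a$, a \emph{witness} event that also touches~$n_j$ (hence interferes with~$d_j$) and lies causally above exactly the variable event $t_a$ or $f_a$ that satisfies the literal. Then a clue hits~$d_j$ only by containing such a witness, which by causal closure commits the clue to a satisfying literal of~$c_j$; conversely any satisfying assignment determines a conflict-free, causally-closed clue that is in conflict with every event of~$D$. The set~$U$ is the union of~$D$ with the local configurations of the variable and witness events, and everything is linear in~$|\phi|$.

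The main obstacle is to make the conflict relation of $\unf{P,\indep_P}$ be \emph{exactly} what the argument needs while working only with the coarse dependence of~$\indep_P$: two actions interfere only if issued by the same thread, or if both are lock/unlock operations on a common mutex. Concretely one must simultaneously (a)~rule out any ``spurious'' way of hitting~$d_j$ that would not commit the clue to a satisfying literal, and, more delicately, (b)~ensure that witnesses of \emph{different} variables occurring in a \emph{common} clause do not themselves conflict inside the clue --- transmitting the causal link from~$t_a$ to its witness through the thread that issued~$t_a$ makes all such witnesses share~$n_j$ and hence pairwise conflict, forbidding precisely the clue that a satisfying assignment should yield. Resolving this requires a careful layout of the auxiliary lock/unlock chains that carry causality from $t_a$/$f_a$ to the witnesses (for instance through dedicated per-incidence helper threads), after which both directions of the equivalence, together with the routine verification against \cref{def:unfd} that every named event really belongs to $\unf{P,\indep_P}$ with the claimed history, go through. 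Alternatively, one may verify directly that the \pes produced in the proof of \cref{thm:pes-altnp} has the form $\unf{P,\indep_P}$ for a suitable~$P$; the reduction sketched here is in effect a way to make that realisability explicit, and it generalises the race-coupling mechanism already illustrated by the program in \cref{fig:ssbs}.
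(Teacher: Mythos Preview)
Your plan is essentially the paper's: reduce from 3-SAT with $C=\emptyset$; for each variable two threads race on a private mutex (giving $t_a\cfl f_a$); for each clause a thread $d_j$ locks a clause-mutex $n_j$; and for every literal occurrence a \emph{witness} event locks $n_j$ and sits causally above the corresponding $t_a$ or $f_a$, the causal link being realised via a per-occurrence helper thread that waits on a lock released by the variable thread. The paper's construction is exactly this (their helper threads are called $r_{\tup{v_i,c_j}}$, started by the variable thread releasing a dedicated lock).

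Two corrections are worth making. Your obstacle~(b) misdiagnoses the problem. In the paper's construction the witnesses for a common clause \emph{do} pairwise conflict---they all lock the same $n_j$---and this is harmless, since an alternative needs only \emph{one} witness per $d_j$. The genuine obstruction to placing witnesses inside the variable threads themselves is different: if thread $t_a$ sequentially locks $n_{j_1},n_{j_2},\ldots$ for the clauses in which $x_a$ occurs positively, then using $x_a$'s witness for a late clause drags, by causal closure, $x_a$'s witness for every earlier clause into the clue; this can conflict with the witness some \emph{other} variable is forced to supply for one of those earlier clauses, blocking a clue that a satisfying assignment should permit. Per-occurrence helper threads sever these intra-thread causal chains, which is why your proposed cure is right even though your stated reason is not.

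Second, your closing ``alternatively'' does not go through and the paper says so explicitly: the \pes built in the proof of \cref{thm:pes-altnp} is \emph{not} of the form $\unf{P,\indep_P}$ for any program~$P$. In such an unfolding any two events operating on the same mutex are either causally ordered or in conflict, so immediate conflict restricted to a single mutex is essentially complete; but in the \cref{thm:pes-altnp} encoding one has, e.g., $t_1\cfl f_1$ and $f_1\cfl d_2$ yet $\lnot(t_1\cfl d_2)$, which no single mutex can produce. A fresh encoding is therefore required, and yours (which is the paper's) is it.
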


These complexity results lead us to consider (in next section) new approaches
that avoid the NP-hardness of computing alternatives while still retaining their
capacity to prune the search.

Finally, we focus on the complexity of computing $\ex C$, which essentially
reduces to computing $\cex C$, as computing $\en C$ is trivial.
Assuming that $\les$ is given, computing $\cex C$ for some $C \in \conf \les$ is
a linear problem. However, for any realistic implementation of~\cref{a:a1},
$\les$ is not available (the very goal of~\cref{a:a1} is to find all of its
events).
So a useful complexity result about~$\cex C$ necessarily refers to the orignal
system under analysis.
When $\les$ is the unfolding of a Petri net~\cite{Mcm93}
(see \cref{sec:basic-defs} for a formal definition), computing~$\cex C$ is
NP-complete:
\begin{restatable}{theorem}{thmcexnp}
	\label{thm:cex-np}
	Let $N$ be a Petri net,
	$t$ a transition of~$N$,
   $\les$ the unfolding of~$N$
	and~$C$ a configuration of~$\les$.
	Deciding if $h^{-1}(t) \cap \cex C = \emptyset$ is NP-complete.
\end{restatable}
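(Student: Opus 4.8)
The plan is to prove membership in NP and NP-hardness separately. For membership, note that a conflicting extension $\hat e \in \cex C$ labelled by $t$ is, by the unfolding construction for Petri nets, an event of the form $\tup{t, H}$ whose history $H$ is a configuration built from events causally below the places in the preset of $t$. A certificate is simply such a history $H$ together with its set of events; since $\les$ is the net unfolding, $H$ has size polynomially bounded by the number of tokens consumed along a firing sequence, and checking that $H$ is a configuration, that $H \subseteq [\text{something reachable}]$, that $t$ is enabled at $\statee H$, and that $H \not\subseteq C$ (equivalently $\tup{t,H} \notin \en C$ because it conflicts with some event of $C$) is all doable in polynomial time. Hence the complement problem --- ``does there exist $\hat e \in \cex C$ with $h(\hat e) = t$?'' --- is in NP, so deciding $h^{-1}(t) \cap \cex C = \emptyset$ is in coNP; I would instead phrase the theorem's decision problem as its natural existential form and argue NP directly, or observe the standard fact that for this kind of unfolding question the two directions are interchangeable up to the usual complexity-class symmetry. (I expect the cleanest route is to show the \emph{existence} problem ``$h^{-1}(t)\cap\cex C\neq\emptyset$'' is NP-complete and note the statement as phrased is its complement.)

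For NP-hardness I would reduce from a known NP-complete problem --- the most natural candidate is the reachability/marking-coverability-style question for 1-safe Petri nets, or more directly, reuse the reduction already built for \cref{thm:pes-altnp}. The idea: encode a SAT-like instance so that a configuration $H$ enabling $t$ and conflicting with the chosen $C$ exists if and only if the instance is satisfiable. Concretely, build a 1-safe net with a ``variable gadget'' for each Boolean variable (two mutually exclusive transitions producing tokens on a ``true'' or ``false'' place) and a ``clause-checking'' subnet that can only put a token in the preset of $t$ once a consistent full assignment has been selected; let $C$ be a configuration that already contains the events witnessing some particular fixed partial choice, so that a conflicting extension labelled $t$ must rest on an alternative, satisfying assignment. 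Then $\cex C$ contains an event labelled $t$ iff the formula is satisfiable, and the net, $t$, and $C$ are all constructible in polynomial time from the formula.

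The main obstacle I anticipate is controlling the \emph{size of the unfolding prefix} relevant to the reduction and ensuring $C$ itself is a legitimate, polynomially-presentable configuration: the unfolding of even a small net can be exponential, so I must argue that the only events whose membership in $\cex C$ we care about --- those labelled $t$ --- have histories of polynomial size (this follows if $t$ fires once, late, and its preset places are filled along a bounded-length run), and that $C$ can be specified by a short firing sequence rather than as an abstract event set. A secondary subtlety is that $\cex C$ is defined relative to a \emph{given} configuration $C$, whereas the unfolding $\les$ is infinite/huge; I would make sure the reduction never needs to materialize $\les$, only the net, the transition $t$, and a firing sequence generating $C$, which is exactly the input format the theorem allows. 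If reusing the \cref{thm:pes-altnp} reduction, the extra work is just translating ``alternative to $D$ after $C$'' into ``conflicting extension labelled $t$'', which is a routine gadget: add a fresh transition $t$ whose firing is enabled precisely when the events of a would-be alternative have occurred.
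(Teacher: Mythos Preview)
Your overall plan (membership via a guessed sub-history, hardness via a SAT reduction) matches the paper's, but the hardness argument has a real gap rooted in a misreading of what $\cex C$ is. Recall that $e \in \cex C$ requires $\causes e \subseteq C$: the \emph{history} of a conflicting extension must lie \emph{inside} $C$; it is the event $e$ itself, not its causes, that conflicts with something in~$C$. Your gadget with two mutually exclusive transitions per variable and $C$ fixing ``some particular fixed partial choice'' therefore cannot work: if the false-branch event for $v_1$ is not in $C$, then no $t$-labelled event whose history contains that event is in $\ex C$ at all. In your construction the only $t$-labelled events in $\ex C$ would be those whose histories are subconfigurations of the one assignment already hard-coded into $C$, so the existence of such an event says nothing about satisfiability of $\phi$. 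The same issue bites the fallback of ``reusing the \cref{thm:pes-altnp} reduction'': there the alternative $J$ is allowed to contain events \emph{outside} $C$, whereas here the candidate history is forced to be a subset of $C$. The translation is not a routine gadget. (Your NP part contains a related slip: you write ``$H \not\subseteq C$'' as the check for $\tup{t,H}\notin\en C$; it should be $H\subseteq C$ together with the existence of some $e'\in C\setminus H$ dependent with $t$.)

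The paper avoids this by building $C$ so that its \emph{subconfigurations} encode all assignments. For each variable $v_i$ the net has two \emph{sequentially ordered} transitions $p_i$ (set $v_i$ true) and $n_i$ (undo $p_i$ and set $v_i$ false), with $n_i$ only enabled after $p_i$; different variables are independent, but the target transition $t$ is declared dependent with every $p_i$ and $n_i$. The configuration $C$ contains all $2n$ events (one $p_i$ and one $n_i$ per variable). Then any $C'\subseteq C$ is a choice, per variable, among $\emptyset$, $\{p_i\}$, or $\{p_i,n_i\}$; the marking $\statee{C'}$ enables $t$ iff the induced (partial) assignment satisfies every clause; and because $t$ is dependent with everything in $C$, the event $\tup{t,C'}$ conflicts with every event of $C\setminus C'$ and hence lies in $\cex C$ whenever $C'\subsetneq C$. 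This is the missing idea: do not put the alternative choices in conflict inside the net; instead, make them causally stacked inside $C$ so that selecting a prefix per variable picks the truth value, and let the dependence of $t$ with all of $C$ manufacture the conflict that pushes $\tup{t,C'}$ into $\cex C$.
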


Fortunately, computing $\cex C$ for programs 
is a much simpler task.
Function \cexp{$C$}, shown in~\cref{a:a1}, computes and returns $\cex C$ when
$\les$ is the unfolding of some program.
We explain \cexp{$C$} in detail in~\cref{sec:cexp}.
But assuming that functions
\pt and \pm can be computed in constant time, and relation $<$ decided in
$\bigo{\log |C|}$, as we will show,
clearly \cexp works in time~$\bigo{n^2\log n}$, where $n \eqdef |C|$,
as both loops are bounded by the size of~$C$.

\section{New Algorithm for Computing Alternatives}
\label{sec:algo}

This section introduces a new class of clues, called $k$-partial alternatives.
These can arbitrarily reduce the number of redundant explorations
(SSBs) performed by \cref{a:a1} and can be computed in polynomial time.
Specialized data structures and algorithms for $k$-partial alternatives are also
presented.

\begin{definition}[k-partial alternative]
\label[definition]{def:kalt}
Let~$U$ be a set of events, $C \subseteq U$~a configuration, $D \subseteq U$
a set of events, and $k \in \N$ a number.
A configuration~$J$ is a \emph{$k$-partial alternative}
to~$D$ after~$C$ if there is some $\hat D \subseteq D$
such that $|\hat D| = k$ and~$J$ is an alternative to~$\hat D$ after~$C$.
\end{definition}

A $k$-partial alternative needs to conflict with only $k$ (instead of all)
events in~$D$.
An alternative is thus an $\infty$-partial alternative.
If we reframe SDPOR in terms of~\cref{a:a1}, it becomes an algorithm
using \emph{singleton 1-partial} alternatives.
While $k$-partial alternatives are a very simple concept,
most of their simplicity stems from the fact that they are expressed
within the elegant framework of \pes semantics.
Defining the same concept on top of sequential semantics
(often used in the~POR literature~\cite{God96,FG05,YCGK08,AAJS14,AAAJLS15,FHRV13}),
would have required much more complex device.

We compute $k$-partial alternatives using
a $comb$ data structure:
\begin{definition}[Comb]
\label[definition]{def:cmb}
Let~$A$ be a set.
An \emph{A-comb}~$c$ of size $n \in \N$ is an ordered collection of
\emph{spikes} $\tup{s_1,\ldots,s_n}$,
where each spike $s_i \in A^*$ is a sequence of elements over~$A$.
Furthermore, a \emph{combination} over $c$ is any tuple $\tup{a_1, \ldots, a_n}$ where
$a_i \in s_i$ is an element of the spike.
\end{definition}

It is possible to compute $k$-partial alternatives (and by extension optimal
alternatives) to~$D$ after~$C$ in~$U$ using a comb, as follows:

\begin{enumerate}[topsep=2pt]
\item
  Select $k$ (or $|D|$, whichever is smaller) arbitrary events
  $e_1, \ldots, e_k$ from~$D$.
\item
  Build a $U$-comb $\tup{s_1, \ldots, s_k}$ of size $k$, where spike~$s_i$
  contains all events in $U$ in 
  conflict with $e_i$.
\item
  Remove from~$s_i$ any event $\hat e$ such that either $[\hat e] \cup C$ is not a
  configuration or $[\hat e] \cap D \ne \emptyset$.
\item
  Find combinations $\tup{e'_1, \ldots, e'_k}$ in the comb satisfying 
  $\lnot (e'_i \cfl e'_j)$ for $i \ne j$.
\item
  For any such combination the set $J \eqdef [e'_1] \cup \ldots \cup [e'_k]$ is a
  $k$-partial alternative.
\end{enumerate}

Step 3 guarantees that~$J$ is a clue.
Steps~1 and~2 guarantee that it will conflict with at least~$k$ events from~$D$.
%
It is straightforward to prove that
the procedure will find a $k$-partial
alternative to~$D$ after~$C$ in~$U$ when an $\infty$-partial alternative to~$D$
after~$C$ exists in~$U$.
It can thus be used to implement \cref{def:alt-fun}.

Steps 2, 3, and 4 require to decide whether a given pair of events is
in conflict.
Similarly, step 3 requires to decide if two events are causally related.
Efficiently computing $k$-partial alternatives thus reduces to efficiently
computing causality and conflict between events.

\subsection{Computing Causality and Conflict for PES events}
\label{sub:cctree}

In this section we introduce an efficient data structure for deciding whether
two events in the unfolding of a program are causally related or in conflict.

As in \cref{sec:por}, let
$P$ be a program,
$M_P$ its LTS semantics, and
$\indep_P$ its independence relation (defined in~\cref{sec:prelim}).
Additionally, let $\les$ denote the \pes
$\unf{P,\indep_P}$ of~$P$ extended with a
new event~$\bot$ that causally precedes every event in~$\unf{P,\indep_P}$.

The unfolding~$\les$ represents the dependency of actions in~$M_P$ through
the causality and conflict relations between events.
By definition of~$\indep_P$ we know that for any two events $e,e' \in \les$:
\begin{itemize}
\item
  If $e$ and $e'$ are events from the same thread, then they are either causally
  related or in conflict.
\item
  If $e$ and $e'$ are lock/unlock operations on the same variable, then
  similarly they are either causally related or in conflict.
\end{itemize}

This means that the causality/conflict relations between all events of one thread
can be tracked using a tree.
For every thread of the program we define and maintain a so-called \emph{thread tree}.
Each event of the thread has a corresponding node in the tree.
A tree node~$n$ is the parent of another tree node~$n'$ iff the event associated
with~$n$ is the immediate causal predecessor of the event associated with~$n'$.
That is, the ancestor relation of the tree encodes
the causality relations of events in the thread,
and the branching of the tree represents conflict.
Given two events $e, e'$ of the same thread we have that
$e < e'$ iff $\lnot (e \cfl e')$ iff
the tree node of $e$ is an ancestor of the tree node of~$e'$.

We apply the same idea to track causality/conflict between $\lock$ and $\unlock$
events.
For every lock $l \in \locks$ we maintain a separate \emph{lock tree}, containing
a node for each event labelled by either $\tup{\lock, l}$ or $\tup{\unlock, l}$.
As before, the ancestor relation in a lock tree encodes the causality
relations of all events represented in that tree.
Events of type $\lock$/$\unlock$ have tree nodes in both their lock and thread
trees.
Events for $\local$ actions are associated to only one node in the thread tree.

This idea gives a procedure to decide a causality/conflict query for two events
when they belong to the same thread or modify the same lock.
But we still need to decide causality and conflict for other events, \eg,
$\local$ events of different threads.
Again by construction of~$\indep_P$, the only source of conflict/causality
for events are the causality/conflict relations between the causal
predecessors of the two.
These relations can be summarized by keeping two mappings for each event:

\begin{definition}
Let $e \in E$ be an event of~$\les$.
We define the
\emph{thread mapping} $\tmaxx \colon E \times \nat \to E$ as the only
function that maps every pair $\tup{e,i}$ to the unique $<$-maximal event
from thread~$i$ in~$[e]$, or~$\bot$ if~$[e]$ contains no event from thread~$i$.
Similarly,
the \emph{lock mapping} $\lmaxx \colon E \times \locks \to E$ maps every
pair $\tup{e,l}$ to the unique $<$-maximal event~$e' \in [e]$ such that
$h(e')$ is an action of the form $\tup{\lock, l}$ or $\tup{\unlock, l}$,
or~$\bot$ if no such event exists in~$[e]$.
\end{definition}

The information stored by the thread and lock mappings enables us to decide
causality and conflict queries for arbitrary pairs of events:

\begin{restatable}{theorem}{procaus}
\label{thm:caus}
Let $e, e' \in \les$ be two arbitrary events from \resp threads $i$ and~$i'$,
with $i \ne i'$.
Then $e < e'$ holds iff $e \leqslant \tmax{e', i}$. And
$e \cfl e'$ holds iff there is some $l \in \locks$ such that
$\lmax{e, l} \cfl \lmax{e', l}$.
\end{restatable}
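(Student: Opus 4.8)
The plan is to prove each of the two biconditionals separately, and in both cases the core idea is the same: by the construction of $\indep_P$ in~\cref{def:unfd}, two events of different threads can only acquire a causal or conflict relationship \emph{through} their causal predecessors that lie in a common thread (for causality) or a common lock (for conflict), since distinct-thread, distinct-lock actions are independent and hence never directly ordered or in conflict. So the whole argument is about ``routing'' any such relation through the appropriate maximal predecessor.

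\emph{Causality.} First I would prove $e < e' \iff e \leqslant \tmax{e',i}$. The $\Leftarrow$ direction is immediate: $\tmax{e',i} \in [e']$, so $\tmax{e',i} \leqslant e'$, and composing with $e \leqslant \tmax{e',i}$ gives $e \leqslant e'$; since $e$ and $e'$ are from different threads they are distinct, so $e < e'$. For $\Rightarrow$, assume $e < e'$, i.e. $e \in \causes{e'} \subseteq [e']$. Then $e$ is an event of thread~$i$ inside $[e']$, so by definition of the thread mapping $\tmax{e',i}$ is well-defined (not $\bot$) and is the $<$-maximal thread-$i$ event in $[e']$. Now I would invoke the structural fact, recalled in~\cref{sub:cctree}, that all events of a single thread in $[e']$ are totally ordered by~$<$ (same-thread events are causally related or in conflict, and a local configuration is conflict-free, hence they form a chain). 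Therefore $e$ and $\tmax{e',i}$ are $<$-comparable, and by maximality $e \leqslant \tmax{e',i}$.

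\emph{Conflict.} For $e \cfl e' \iff \exists l \in \locks:\ \lmax{e,l} \cfl \lmax{e',l}$, the $\Leftarrow$ direction follows from conflict inheritance (conflict heredity) in a \pes: if $\lmax{e,l} \cfl \lmax{e',l}$ and $\lmax{e,l} \leqslant e$, $\lmax{e',l} \leqslant e'$, then $e \cfl e'$. For $\Rightarrow$, suppose $e \cfl e'$. I would argue by induction on $|[e]| + |[e']|$ (or, equivalently, analyze a ``minimal'' conflicting pair in the ancestry of $e,e'$). By the way conflicts are introduced in~\cref{def:unfd}, a conflict $f \cfl f'$ with, say, $f = \tup{a,H}$ arises because $f' \notin H$, $f \notin [f']$, and $\lnot(a \indep h(f'))$; by the definition of $\indep_P$, $\lnot(a \indep h(f'))$ for distinct-thread actions forces $a$ and $h(f')$ to be lock/unlock operations on a \emph{common} lock $l$ (the only non-thread-local source of dependence). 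Following the conflict back to its minimal witnessing pair: either $e,e'$ are themselves from the same thread (excluded here by hypothesis $i\ne i'$) or same lock, or the conflict is inherited from a strictly smaller pair $\langle f, f'\rangle$ with $f \leqslant e$, $f' \leqslant e'$; chasing this down, the minimal pair must consist of two events acting on a common lock $l$, and these are precisely $\leqslant$-dominated by $\lmax{e,l}$ and $\lmax{e',l}$ respectively (again using that events on a single lock in a local configuration form a chain, so the conflict "lifts" to the lock-maximal predecessors). Conflict heredity then gives $\lmax{e,l} \cfl \lmax{e',l}$.

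The main obstacle I anticipate is the $\Rightarrow$ direction of the conflict case: making precise the claim that every conflict between events of different threads is \emph{ultimately rooted} in a single common-lock pair of ancestors, and that this pair is dominated by the lock mappings. This requires a careful induction using the exact form of the conflict-insertion rule in~\cref{def:unfd} together with the shape of $\indep_P$ (so that the only "new" conflicts at each step are same-thread or same-lock), plus the chain property of same-lock events within a local configuration to push the witnessing conflict up to $\lmax{\cdot,l}$. The causality direction and both $\Leftarrow$ directions are routine by comparison.
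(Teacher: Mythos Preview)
Your proposal is correct and follows essentially the same route as the paper. The only cosmetic difference is in the $\Rightarrow$ direction of the conflict claim: where you set up an induction on $|[e]|+|[e']|$ to locate a minimal conflicting pair, the paper shortcuts this by directly invoking the standard \pes{} fact that any conflict $e \cfl e'$ is witnessed by an \emph{immediate} conflict $e_1'\,\#^i\, e_2'$ with $e_1'\in[e]$, $e_2'\in[e']$, then observes that in $\unf{P,\indep_P}$ such immediate conflicts between distinct-thread events can only occur between $\lock$ events on a common lock~$l$, and lifts to $\lmax{e,l}\cfl\lmax{e',l}$ by conflict inheritance exactly as you do.
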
	 

As a consequence of \cref{thm:caus}, deciding whether two events are
related by causality or conflict
reduces to deciding whether two nodes from the \emph{same} lock or thread tree
are ancestors.

\subsection{Computing Causality and Conflict for Tree Nodes}
\label{ss:stree}
 
This section presents an efficient algorithm to decide if two nodes of a tree
are ancestors. The algorithm is similar to a search in a skip list~\cite{Pugh89}.

Let $\tup{N, {\lessdot}, r}$ denote a tree, where~$N$ is a set of \emph{nodes},
${\lessdot} \subseteq N \times N$ is the \emph{parent relation}, and $r \in N$
is the root.
Let $d(n)$ be the depth of each node in the tree, with $d(r) = 0$.
A node~$n$ is an \emph{ancestor} of~$n'$ if it belongs to the only path
from~$r$ to~$n'$.
Finally, for a node $n \in N$ and some integer $g \in \nat$ such that $g \le
d(n)$ let $q(n,g)$ denote the unique ancestor~$n'$ of~$n$ such that
$d(n') = g$.

Given two \emph{distinct} nodes $n, n' \in N$, we need to
efficiently decide whether $n$ is an ancestor of~$n'$.
The key idea is that
if $d(n) = d(n')$, then the answer is clearly negative;
and if the depths are different and \wlogg $d(n) < d(n')$,
then we have that $n$ is an ancestor of~$n'$ iff nodes~$n$ and
$n'' \eqdef q(n',d(n))$ are the same node.

To find~$n''$ from~$n'$, a linear traversal of the branch starting from~$n'$
would be expensive for deep trees.
Instead, we propose to use a data structure similar to a skip list.
Each node stores a pointer to the parent node \emph{and} also a number of
pointers to ancestor nodes at distances $s^1, s^2, s^3, \ldots$,
where~$s \in \nat$ is a user-defined \emph{step}.
The number of pointers stored at a node~$n$ is equal to the number of trailing
zeros in the~$s$-ary representation of~$d(n)$.
For instance, for $s \eqdef 2$ a node at depth~$4$ stores 2~pointers (apart from
the pointer to the parent) pointing to the nodes at depth~$4-s^1 = 2$ and
depth~$4-s^2 = 0$.
Similarly a node at depth~12 stores a pointer to the ancestor (at depth~11) and
pointers to the ancestors at depths~10 and~8.
With this algorithm computing $q(n,g)$ requires traversing $\log(d(n) - g)$
nodes of the tree.

\subsection{Computing Conflicting Extensions}
\label{sec:cexp}

We now explain how function \cexp{$C$} in~\cref{a:a1} works.
A call to \cexp{$C$} constructs and returns all events in~$\cex C$.
The function works only when the \pes being explored is the unfolding of a
program~$P$ under the independence~$\indep_P$.

Owing to the properties of~$\unf{P,\indep_P}$, all events
in~$\cex C$ are labelled by $\lock$ actions.
Broadly speaking, this is because
only the actions from different threads that are co-enabled \emph{and}
are dependent create conflicts in~$\unf{P,\indep_P}$.
And this is only possible for~$\lock$ statements.
For the same reason, an event labelled by
$a \eqdef \tup{i, \tup{\lock, l}}$ exists in~$\cex C$ iff
there is some event $e \in C$ such that $h(e) = a$.

Function \cexp exploits these facts and the lock tree
introduced in~\cref{sub:cctree} to compute~$\cex C$.
Intuitively, it finds every event $e$ labelled by an $\tup{\lock, l}$ statement
and tries to ``execute'' it before the $\tup{\unlock, l}$ that happened
before~$e$ (if there is one). If it can, it creates a new event~$\hat e$ with
the same label as~$e$.

Function \pt{$e$} returns the only immediate causal predecessor of event~$e$ in
its own thread.
For an $\lock$/$\unlock$ event~$e$,
function \pm{$e$} returns the parent node of event~$e$ in its lock tree (or
$\bot$ if $e$ is the root).
So for an~$\lock$ event it returns a~$\unlock$ event, and for a~$\unlock$ event
it returns an~$\lock$ event.

\section{Experimental Evaluation}
\label{sec:exp}
We implemented QPOR in a new tool
called~\dpu
(\emph{Dynamic Program Unfolder}, available at
\url{https://github.com/cesaro/dpu/releases/tag/v0.5.2}).
\dpu is a stateless model checker for C programs with POSIX threading.
It uses the LLVM infrastructure to parse, instrument, and JIT-compile the
program, which is assumed to be data-deterministic.
%
%
It implements $k$-partial alternatives ($k$ is an input), optimal POR, and
context-switch bounding~\cite{CMK13}.

\dpu does not use data-races as a source of thread interference for POR.
It will not explore two execution orders for the two instructions that exhibit a
data-race.
However, it can be instructed to detect and report data races found during
the POR exploration.
When requested, this detection happens for a user-provided percentage of the
executions explored by POR.

\subsection{Comparison to SDPOR}

In this section we investigate the following experimental questions:
%
 (a) How does QPOR compare against SDPOR?
 (b) For which values of~$k$ do $k$-partial alternatives yield optimal exploration?

We use realistic programs that expose complex thread 
synchronization patterns including a job dispatcher, a multiple-producer
multiple-consumer scheme, parallel computation of $\pi$, and a thread pool.
Complex synchronizations patterns are frequent in these examples, including
nested and intertwined critical sections or conditional interactions between
threads based on the processed data,
and provide means to highlight the differences between POR approaches and drive
improvement.
Each program contains between~2 and~8 assertions, often ensuring
invariants of the used data structures.
All programs are safe and have between~90 and~200 lines of code.
We also considered the SV-COMP'17 benchmarks, but almost all of them contain very
simple synchronization patterns, not representative of more complex concurrent
algorithms. \cref{sec:svcomp17} provides the experimental data of this
comparison.
On these benchmarks QPOR and SDPOR perform an almost identical exploration,
both timeout on exactly the same instances, and both find exactly the same bugs.

\newcommand\newrow{\\[-1.0pt]}
\newcommand\param[1]{\scriptsize(#1)}
\newcommand\h{\rmfamily\bfseries}

\newcommand\cmidrules{
  \cmidrule(r){1-3}
  \cmidrule(r){4-5}
  \cmidrule(r){6-7}
  \cmidrule(r){8-9}
  \cmidrule(r){10-11}
  \cmidrule(r){12-14}
}

\begin{table*}[!t]
\scriptsize

\setlength\tabcolsep{3.5pt}
\def\sep{\hspace{15pt}}
\def\tinysep{\hspace{5pt}}
\centering
\parbox[c][8cm]{\linewidth}{
\tt
\begin{tabular}[t]{lrr@{\tinysep}rr@{\tinysep}rr@{\tinysep}rr@{\tinysep}rr@{\sep}rrr}	
\toprule
  \multicolumn{3}{l}{\rm Benchmark}
& \multicolumn{2}{l}{\rm \dpu (k=1)}  
& \multicolumn{2}{l}{\rm \dpu (k=2)}
& \multicolumn{2}{l}{\rm \dpu (k=3)} 
& \multicolumn{2}{l}{\rm \dpu (optimal)} 
& \multicolumn{3}{l}{\rm \nidhugg}

\newrow
\cmidrules

  {\rm Name}
& {\rm Th}
& {\rm Confs}

& {\rm Time}
& {\rm SSB}

& {\rm Time}
& {\rm SSB}

& {\rm Time}
& {\rm SSB}

& {\rm Time}
& {\rm Mem}

& {\rm Time}
& {\rm Mem}
& {\rm SSB}

\newrow
\midrule


\rm\sc Disp\param{5,2}    &     8 &      137 &    0.8   &     1K   &    0.4   &       43 &    0.4   &        0 &\h  0.4   &      37  &    1.2   &      33 &     2K   \newrow
\rm\sc Disp\param{5,3}    &     9 &     2K   &    5.4   &    11K   &    1.3   &      595 &    1.0   &        1 &\h  1.0   &      37  &   10.8   &      33 &    13K   \newrow
\rm\sc Disp\param{5,4}    &    10 &    15K   &   58.5   &   105K   &   16.4   &     6K   &   10.3   &      213 &\h 10.3   &      87  &  109     &      33 &   115K   \newrow
\rm\sc Disp\param{5,5}    &    11 &   151K   &       TO &        - &  476     &    53K   &  280     &     2K   &\h257     &     729  &       TO &      33 &        - \newrow
\rm\sc Disp\param{5,6}    &    12 &        ? &       TO &        - &       TO &        - &       TO &        - &       TO &    1131  &       TO &      33 &        - \newrow
\cmidrules

\rm\sc Mpat\param 4       &     9 &      384 &    0.5   &        0 &      N/A &          &      N/A &          &\h  0.5   &      37  &    0.6   &      33 &        0 \newrow
\rm\sc Mpat\param 5       &    11 &     4K   &    2.4   &        0 &      N/A &          &      N/A &          &    2.7   &      37  &\h  1.8   &      33 &        0 \newrow
\rm\sc Mpat\param 6       &    13 &    46K   &   50.6   &        0 &      N/A &          &      N/A &          &   73.2   &     214  &\h 21.5   &      33 &        0 \newrow
\rm\sc Mpat\param 7       &    15 &     645K &       TO &        - &       TO &        - &       TO &        - &       TO &     660  &\h359     &      33 &        0 \newrow
\rm\sc Mpat\param 8       &    17 &        ? &       TO &        - &       TO &        - &       TO &        - &       TO &     689  &       TO &      33 &        - \newrow
\cmidrules

\rm\sc MPC\param{2,5}     &     8 &       60 &    0.6   &      560 &    0.4   &        0 &          &          &\h  0.4   &      38  &    2.0   &      34 &     3K   \newrow
\rm\sc MPC\param{3,5}     &     9 &     3K   &   26.5   &    50K   &    3.0   &     3K   &    1.7   &        0 &\h  1.7   &      38  &   70.7   &      34 &    90K   \newrow
\rm\sc MPC\param{4,5}     &    10 &   314K   &       TO &        - &       TO &        - &  391     &    30K   &\h296     &     239  &       TO &      33 &        - \newrow
\rm\sc MPC\param{5,5}     &    11 &        ? &       TO &        - &       TO &        - &       TO &        - &       TO &     834  &       TO &      34 &        - \newrow
\cmidrules

\rm\sc Pi\param{5}        &     6 &      120 &\h  0.4   &        0 &      N/A &          &      N/A &          &    0.5   &      39  &   19.6   &      35 &        0 \newrow
\rm\sc Pi\param{6}        &     7 &      720 &\h  0.7   &        0 &      N/A &          &      N/A &          &    0.7   &      39  &  123     &      35 &        0 \newrow
\rm\sc Pi\param{7}        &     8 &     5K   &\h  3.5   &        0 &      N/A &          &      N/A &          &    4.0   &      45  &       TO &      34 &        - \newrow
\rm\sc Pi\param{8}        &     9 &    40K   &   48.1   &        0 &      N/A &          &      N/A &          &\h 42.9   &     246  &       TO &      34 &        - \newrow
\cmidrules

\rm\sc Pol\param{7,3}    &    14 &     3K   &   48.5   &    72K   &    2.9   &     1K   &\h  1.9   &        6 &    1.9   &      39  &   74.1   &      33 &    90K   \newrow
\rm\sc Pol\param{8,3}    &    15 &     4K   &  153     &   214K   &    5.5   &     3K   &\h  3.0   &       10 &    3.0   &      52  &  251     &      33 &   274K   \newrow
\rm\sc Pol\param{9,3}    &    16 &     5K   &  464     &   592K   &    9.5   &     5K   &\h  4.8   &       15 &    4.8   &      73  &       TO &      33 &        - \newrow
\rm\sc Pol\param{10,3}   &    17 &     7K   &       TO &        - &   17.2   &     9K   &\h  6.8   &       21 &    7.1   &      99  &       TO &      33 &        - \newrow
\rm\sc Pol\param{11,3}   &    18 &    10K   &       TO &        - &   27.2   &    12K   &\h  9.7   &       28 &   10.6   &     138  &       TO &      33 &        - \newrow
\rm\sc Pol\param{12,3}   &    19 &    12K   &       TO &        - &   46.3   &    20K   &\h 13.5   &       36 &   16.4   &     184  &       TO &      33 &        - \newrow

\bottomrule
\end{tabular}
}
\caption{\footnotesize Comparing QPOR and SDPOR.
Machine: Linux, Intel Xeon 2.4GHz.
TO: timeout after 8 min.
Columns are:
Th: \nr of threads;
Confs: maximal configurations;
Time in seconds,
Memory in MB;
SSB: Sleep-set blocked executions.
N/A: analysis with lower~$k$ yielded 0~SSBs.\vspace*{-3mm}}

\label{tab:sdpor}
\vspace{-05pt}
\end{table*}

\let\newrow\undefined
\let\param\undefined
\let\h\undefined
\let\cmidrules\undefined

In \Cref{tab:sdpor}, we present a comparison between \dpu and \nidhugg~\cite{AAAJLS15}, an efficient 
implementation of SDPOR for multithreaded C programs.
We run $k$-partial alternatives with $k \in \set{1, 2, 3}$ and optimal 
alternatives.
The number of SSB executions dramatically decreases as~$k$ increases.
With~$k=3$ almost no instance produces SSBs (except \textsc{MPC(4,5)}) and 
optimality is achieved with $k=4$. 
Programs with simple synchronization patterns, \eg, the \textsc{Pi} benchmark, 
are explored optimally both with $k=1$ and by SDPOR, while more complex 
synchronization patterns require $k > 1$.

Overall, if the benchmark exhibits many SSBs, the run time reduces as $k$ 
increases, and optimal exploration is the fastest option.
However, when the benchmark contains few SSBs (\cf, \textsc{Mpat}, \textsc{Pi},
\textsc{Poke}), $k$-partial alternatives can be slightly faster than optimal
POR, an observation inline with previous literature~\cite{AAJS14}.
Code profiling revealed that
when the comb is large and contains many solutions, both optimal
and non-optimal POR will easily find them, but optimal POR spends additional
time constructing a larger comb.
This suggests that optimal POR would profit from a lazy comb construction
algorithm.

\dpu is faster than~\nidhugg in the majority of the benchmarks because it can 
greatly reduce the number of SSBs.
In the cases where both tools explore the same set of executions,~\dpu is in 
general faster than~\nidhugg because it JIT-compiles the program, while~\nidhugg 
interprets it.
All the benchmark in~\cref{tab:sdpor} are data-race free, but
\nidhugg cannot be instructed to ignore data-races and will attempt to
revert them. \dpu was run with data-race detection disabled. Enabling it will
incur in approximatively 10\% overhead.
In contrast with previous observations~\cite{AAJS14,AAAJLS15}, the results 
in~\cref{tab:sdpor} show that SSBs can dramatically slow down the execution of 
SDPOR.

\subsection{Evaluation of the Tree-based Algorithms}

We now evaluate the efficiency of our tree-based algorithms from~\cref{sec:algo} answering:
(a) What are the average/maximal depths of the thread/lock sequential trees?
(b) What is the average depth difference on causality/conflict queries?
(c) What is the best step for branch skip lists?
We do not compare our algorithms against others
because to the best of our knowledge none is available (other than a naive
implementation of the mathematical definition of causality/conflict).

\begin{figure*}[ht]
	\subcaptionbox{Average depth of the tree nodes}{
		\includegraphics[height=3.5cm, width=0.49\textwidth]{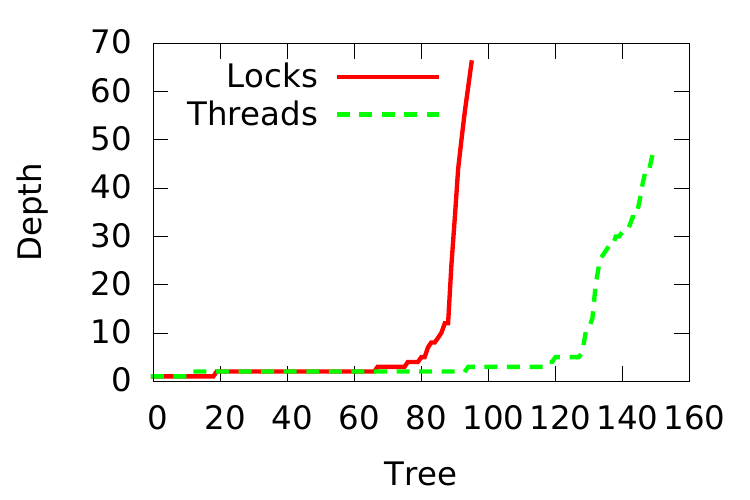}
		\label{tab:depth-avg}
	}
	\subcaptionbox{Maximum depth of the trees}{
		\includegraphics[height=3.5cm, width=0.485\textwidth]{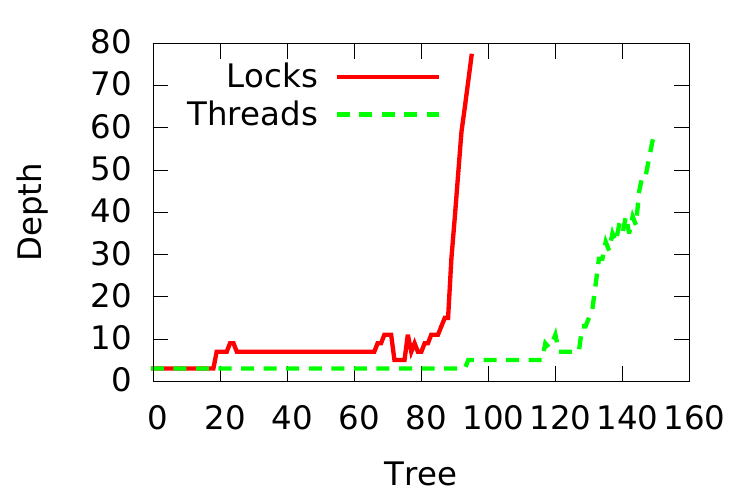}
		\label{tab:depth-max}
	}
	\subcaptionbox{\centering Depth-distance frequency on causality queries}
		{\includegraphics[width=.48\textwidth]{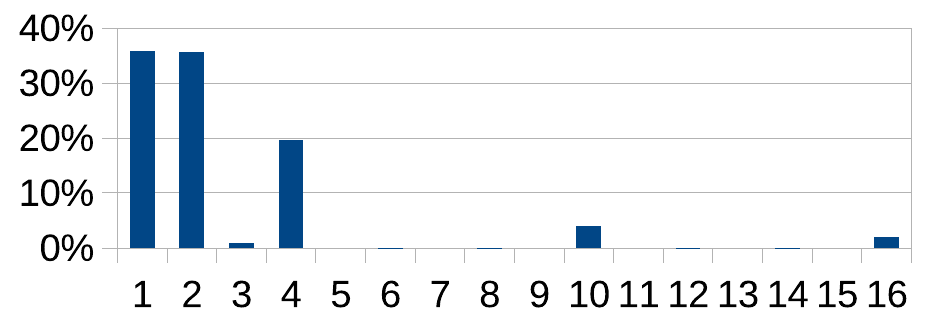}
		\label{tab:hist-cau}
	}
	\subcaptionbox{\centering Depth-distance frequency on conflict queries}
	{
		\includegraphics[width=.5\textwidth]{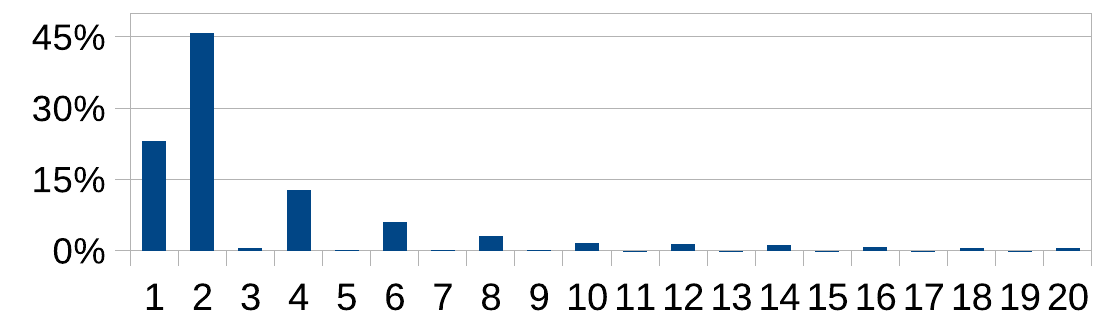}
		\label{tab:hist-cfl}
	}
        \vspace*{-3mm}
	\caption{(a), (b): depths of trees; (c), (d): frequency of depth distances\vspace*{-6mm}}
	\label{fig:depths}
\end{figure*}

We run~\dpu with an optimal exploration over 15 selected programs 
from~\cref{tab:sdpor}, with 380 to 204K maximal configurations in the unfolding.
In total, the 15 unfoldings contain 246 trees
(150 thread trees and 96 lock trees) with~5.2M nodes.
\cref{fig:depths} shows the average depth of the nodes in each tree
(subfigure a) and the maximum depth of the trees (subfigure b),
for each of the 246 trees.

While the average depth of a node is~22.7, as much as 80\% of the trees have a 
maximum depth of less than~8 nodes, and 90\% of them less than 16 nodes.
The average of 22.7 is however larger because deeper trees contain  
proportionally more nodes.
The depth of the deepest node of every tree was between~3 and~77.

We next evaluate depth differences in the causality and conflict queries over 
these trees.
\cref{fig:depths}~(a) and~(b) respectively show
the frequency of various depth distances associated to causality and 
conflict queries made by optimal POR.

Surprisingly, depth differences are very small for both causality and conflict
queries.
When deciding causality between events, as much as 92\% of the queries were for
tree nodes separated by a distance between~1 and~4, and 70\% had a difference 
of 1 or~2 nodes.
This means that optimal POR, and specifically the procedure that adds $\ex{C}$
to the unfolding (which is the main source of causality queries),
systematically performs causality queries which are trivial with the proposed 
data structures.
The situation is similar for checking conflicts: 82\% of the queries are about 
tree nodes whose depth difference is between~1 and~4.

\newcommand\newrow{\\[-1.0pt]}
\newcommand\param[1]{\scriptsize(#1)}

\newcommand\ocmidrules{
	\cmidrule(r){1-3}
	\cmidrule(r){4-6}
	\cmidrule(r){7-9}
}

\begin{wraptable}[24]{R}{0.50\textwidth}
\scriptsize
\setlength\tabcolsep{2.5pt}
\def\sep{\hspace{8pt}}
\def\tinysep{\hspace{5pt}}
\centering
\tt
\begin{tabular}[t]{l@{\hspace{-2pt}}rr@{\sep}rrr@{\tinysep}rrr}
	\toprule
	  \multicolumn{3}{l}{\rm Benchmark}
	& \multicolumn{3}{l}{\rm \dpu }  
	& \multicolumn{3}{l}{\rm \maple}

	\newrow
	\ocmidrules

	  {\rm Name}
	& {\rm LOC}
	& {\rm Th}

	& {\rm Time}
	& {\rm Ex}
	& {\rm R}
	
	& {\rm Time}
	& {\rm Ex}
	& {\rm R}

	\newrow
	\midrule

\rm\sc Add\param{2}    & 40K &  3 &   24.3 &     2 &   U  &    2.7 &    2 &  S    \newrow
\rm\sc Add\param{4}    & 40K &  5 &   25.5 &    24 &   U  &   34.5 &   24 &  U    \newrow
\rm\sc Add\param{6}    & 40K &  7 &   48.1 &   720 &   U  &     TO &  316 &  U    \newrow
\rm\sc Add\param{8}    & 40K &  9 &     TO &   14K &   U  &     TO &  329 &  U    \newrow
\rm\sc Add\param{10}   & 40K & 11 &     TO &   14K &   U  &     TO &  295 &  U    \newrow
\ocmidrules                                                                                                                                                        

\rm\sc Blk\param{5}    &  2K & 2 &    0.9 &      1 &  S  &    4.6 &    1 &  S    \newrow
\rm\sc Blk\param{15}   &  2K & 2 &    0.9 &      5 &  S  &   23.3 &    5 &  S    \newrow
\rm\sc Blk\param{18}   &  2K & 2 &    1.0 &    180 &  S  &     TO &  105 &  S    \newrow
\rm\sc Blk\param{20}   &  2K & 2 &    1.5 &   1147 &  S  &     TO &  106 &  S    \newrow
\rm\sc Blk\param{22}   &  2K & 2 &    2.6 &   5424 &  S  &     TO &  108 &  S    \newrow
\rm\sc Blk\param{24}   &  2K & 2 &   10.0 &    20K &  S  &     TO &  105 &  S    \newrow
\ocmidrules                                                                        
                                                                                                                                                 
\rm\sc Dnd\param{2,4}    & 16K & 3 &  11.1 &     80 &   U  &    122 &   80 &  U    \newrow
\rm\sc Dnd\param{4,2}    & 16K & 5 &  11.8 &     96 &   S  &    151 &   96 &  S    \newrow
\rm\sc Dnd\param{4,4}    & 16K & 5 &    TO &    13K &   U  &     TO &  360 &  U    \newrow
\rm\sc Dnd\param{6,2}    & 16K & 7 & 149.3 &   4320 &   S  &     TO &  388 &  S    \newrow
\ocmidrules                                                                        
                                                                                  
\rm\sc Mdl\param{1,4}    & 38K & 7 &  26.1 &      1 &   U  &    1.4 &    1 &  U    \newrow
\rm\sc Mdl\param{2,2}    & 38K & 5 &  29.2 &      9 &   U  &   13.3 &    9 &  U    \newrow
\rm\sc Mdl\param{2,3}    & 38K & 5 &  46.2 &    576 &   U  &     TO &  304 &  U    \newrow
\rm\sc Mdl\param{3,2}    & 38K & 7 &  31.1 &    256 &   U  &    402 &  256 &  U    \newrow
\rm\sc Mdl\param{4,3}    & 38K & 9 &    TO &    14K &   U  &    TO &   329 &  U    \newrow
\ocmidrules                                                                      
                                                                                
\rm\sc Pla\param{1,5}    & 41K & 2 &  22.8 &      1 &   U  &    1.7 &    1 &  U    \newrow
\rm\sc Pla\param{2,4}    & 41K & 3 &  37.2 &     80 &   U  &  142.4 &   80 &  U    \newrow
\rm\sc Pla\param{4,3}    & 41K & 5 & 160.5 &   1368 &   U  &     TO &  266 &  U    \newrow
\rm\sc Pla\param{6,3}    & 41K & 7 &    TO &   4580 &   U  &     TO &  269 &  U    \newrow

	\bottomrule
	
\end{tabular}

\caption{\footnotesize Comparing DPU with Maple (same machine).
LOC: lines of code;
Execs: \nr of executions;
R: safe or unsafe.
Other columns as before.
Timeout: 8 min.}
\label{tab:maple}
\end{wraptable}

\let\newrow\undefined
\let\param\undefined
\let\cmidrules\undefined

These experiments show that most queries on the causality trees require 
very short walks, which strongly drives to use the data structure proposed 
in~\cref{sec:algo}.
Finally, we chose a (rather arbitrary) skip step of 4.
We observed that other values do not significantly impact the
run time/memory consumption for most benchmarks, since the depth difference on 
causality/conflict requests is very low.

\subsection{Evaluation Against the State-of-the-art on System Code}

We now evaluate the scalability and applicability of~\dpu on five multithreaded 
programs in two Debian packages:
~\emph{blktrace}~\cite{BLKT}, a block layer I/O tracing mechanism, and 
~\emph{mafft}~\cite{MAFFT}, a tool for multiple alignment of amino acid or nucleotide sequences.
The code size of these utilities ranges from 2K to 40K LOC, and
\emph{mafft} is parametric in the number of threads.

We compared~\dpu against \maple \cite{YNPP12}, a state-of-the-art testing tool 
for multithreaded programs, as the top ranked verification tools from
SVCOMP'17 are still unable to cope with such large and complex multithreaded 
code.
Unfortunately we could not compare against \nidhugg because
it cannot deal with the (abundant) C-library calls in these programs.

\Cref{tab:maple} presents our experimental results.
We use~\dpu with optimal exploration and the modified version of~\maple
used in~\cite{TDB16}.
To test the effectiveness of both approaches in state space coverage 
and bug finding, we introduce bugs in 4 of the benchmarks 
(\textsc{Add,Dnd,Mdl,pla}).
For the safe benchmark~\textsc{Blk}, we perform exhaustive state-space
exploration using \maple's DFS mode.
On this benchmark, \dpu outperfors~\maple by several orders of 
magnitude:~\dpu explores up to 20K executions covering the entire
state space in 10s, while~\maple only explores up to 108 executions in
8~min.  

For the remaining benchmarks, we use the random scheduler of~\maple, considered
to be the best baseline for bug finding~\cite{TDB16}.
First, we run~\dpu to retrieve a bound on the number of random executions to
answer whether both tools are able to find the bug within the same number 
of executions.
\maple found bugs in all buggy programs (except for one variant
in~\textsc{Add}) even though~\dpu greatly outperforms and is able to 
achieve much more state space coverage.

\subsection{Profiling a Stateless POR}


In order
to understand the cost of each component of the algorithm,
we profile~\dpu on a selection of 7~programs from~\cref{tab:sdpor}.
\dpu spends between 30\% and 90\% of the run time executing the program (65\% in average).
The remaining time is spent computing alternatives, distributed as follows:
adding events to the event structure (15\% to 30\%),
building the spikes of a new comb (1\% to 50\%),
searching for solutions in the comb (less than~5\%), and
computing conflicting extensions (less than~5\%).
Counterintuitively, building the \emph{comb} is more expensive than exploring it, even 
in the optimal case.
Filling the spikes 
seems to be more memory-intensive than exploring the comb, which exploits data locality.
%

\section{Conclusion} 
\label{sec:concl}

We have shown that computing alternatives in an optimal DPOR exploration is NP-complete.
To mitigate this problem, we introduced a new approach to compute alternatives in polynomial time, approximating the optimal exploration with a user-defined constant.
Experiments conducted on benchmarks including Debian packages show that our 
implementation outperforms current verification tools and uses appropriate data structures.
Our profiling results show that running the program is often more expensive 
than computing alternatives.
Hence, efforts in reducing the number of redundant executions, even if
significantly costly, are likely to reduce the overall execution time.

\bibliographystyle{splncs}
\bibliography{main.bbl}

\newpage
\appendix
\section{Additional Basic Definitions}
\label{sec:basic-defs}

In this section we introduce a number of definitions that were excluded from the
body of the paper owing to space constraints.

\paragraph{Labelled Transition Systems.}

We defined an \lts\ semantics for programs in \cref{sec:prelim} without first
providing a general definition of \lts{}s.
An \lts~\cite{CGP99} is a structure
$M \eqdef \tup{\Sigma, \to, A, s_0}$, where
$\Sigma$ are the \emph{states},
$A$ the \emph{actions},
${\to} \subseteq \Sigma \times A \times \Sigma$ the transition relation, and
$s_0 \in \Sigma$ an \emph{initial state}.
If $s \fire{a} s'$ is a transition, 
the action~$a$ is \emph{enabled} at~$s$
and~$a$ can \emph{fire} at~$s$ to produce~$s'$.
We let~$\enabl s$ denote the set of actions enabled at~$s$.

A sequence $\sigma \eqdef a_1 \ldots a_n \in A^*$ is a \emph{run} when
there are states $s_1, \ldots, s_n$ satisfying
$s_0 \fire{a_1} s_1 \ldots \fire{a_n} s_n$.
We define $\statee \sigma \eqdef s_n$.
We let $\runs M$ denote the set of all runs of~$M$,
and $\reach M \eqdef \set{\statee \sigma \in \Sigma \colon \sigma \in \runs M}$
the set of all \emph{reachable states}.

\paragraph{Prime Event Structures.}

Let $\les \eqdef \tup{E, {<}, {\cfl}}$ be a \pes.
Two events $e, e' \in E$ are in \emph{immediate conflict}
if $e \cfl e'$ but both $\causes e \cup [e]$ and $[e] \cup \causes{e'}$ are free
of conflict.
Given a set $U \subseteq E$, we denote by $\ficfl[U] e$ the set of
events in~$U$ that are in immediate conflict with~$e$.

\paragraph{Unfolding semantics of an \lts.}

In \cref{sec:prelim} we defined the unfolding semantics of a program
(\cref{def:unfd}).
Now we give a slightly more general definition for \lts{}s instead of programs.
The definitions are almost identical, the only differences are found in the
first three lines of the definition.
In particular the four fixpoint rules are exactly the same.
The reason why we give now this definition over \lts{} is because we will use it
to define unfolding semantics for Petri nets in the proof of~\cref{thm:cex-np}.

\begin{definition}[Unfolding of an \lts~\cite{RSSK15}]
\label[definition]{def:unf-lts}
Given an \lts~$M \eqdef \tup{\Sigma, \to, A, s_0}$ and some independence
relation ${\indep} \subseteq A \times A$ on $M$, the
\emph{unfolding of~$M$ under~$\indep$}, denoted $\unf{M,\indep}$,
is the \pes over~$A$ constructed by the following fixpoint rules:
\begin{enumerate}[topsep=0pt]
\item
  Start with a \pes $\les \eqdef \tup{E, <, {\cfl}, h}$
  equal to $\tup{\emptyset, \emptyset, \emptyset, \emptyset}$.
\item
  Add a new event $e \eqdef \tup{a,C}$ to~$E$ for any
  configuration $C \in \conf \les$ and any action $a \in A$ such that
  $a$~is enabled at $\statee C$ and
  $\lnot (a \indep h(e'))$ holds for every $<$-maximal event $e'$ in~$C$.
\item
  For any new $e$ in $E$, update $<$, $\cfl$, and $h$ as follows:
  \begin{itemize}
  \item
    for every $e' \in C$, set $e' < e$;
  \item
    for any $e' \in E \setminus C$,
    set $e' \cfl e$
    if $e \ne e'$ and $\lnot (a \indep h(e'))$;
  \item
    set $h(e) \eqdef a$.
  \end{itemize}
\item
  Repeat steps 2 and 3 until no new event can be added to~$E$;
  return $\les$.
\end{enumerate}
\end{definition}

Obviously, both \cref{def:unfd} and \cref{def:unf-lts} produce the same
unfolding when applied to a program.
That is, for any program~$P$ and independence~$\indep$
on~$M_P$, we have that
$\unf{P,\indep}$ (\cref{def:unfd}) is equal to
$\unf{M_P,\indep}$ (\cref{def:unf-lts}).

\paragraph{Petri nets.}

A Petri net~\cite{Mur89} is a model of a concurrent system.
Formally,
a \emph{net} is a tuple $N \eqdef \tup{P, T, F, m_0}$,
where~$P$ and~$T$ are disjoint finite sets of \emph{places}
and \emph{transitions},
$F \subseteq (P \times T) \cup (T \times P)$
is the \emph{flow relation},
and $m_0 \colon P \to \nat$ is the \emph{initial marking}.
$N$ is called \emph{finite} if $P$ and $T$ are finite.
Places and transitions together are called \emph{nodes}.

For $x \in P \cup T$, let $\pre x \eqdef \{y \in P \cup T
\colon (y,x) \in F\}$ be the \emph{preset},
and $\post x \eqdef \{y \in P \cup T \colon (x,y) \in F\}$  the
\emph{postset} of $x$.
The state of a net is represented by a marking.
A \emph{marking} of~$N$ is a function $m \colon P \to \nat$ that
assigns \emph{tokens} to every place.
A transition~$t$ is \emph{enabled} at a marking~$m$ iff for any $p \in \pre t$
we have $m(p) \geq 1$.

We give semantics to nets using transition systems.
We associate~$N$ with a transition system
$M_N \eqdef \tup{\Sigma, {\to}, A, m_0}$ where
$\Sigma \eqdef P \to \nat$ is the set of markings,
$A \eqdef T$ is the set of transitions, and
${\to} \subseteq \Sigma \times A \times \Sigma$
contains a triple $m \fire{t} m'$ exactly
when, for any $p \in \pre t$ we have~$m(p) \ge 1$,
and for any $p \in P$ we have
$m'(p) = m(p) - |\set p \cap \pre t| + |\set p \cap \post t|$.
We call~$N$ $k$-safe when for any reachable marking $m \in \reach{M_N}$ we
have $m(p) \le k$, for $p \in P$.

\section{General Lemmas}
\label{sec:general-lemmas}

\newcommand\calls{\mathrel{\triangleright}}
\newcommand\callsl{\mathrel{\triangleright_l}}
\newcommand\callsr{\mathrel{\triangleright_r}}

For the rest of this section,
we fix an \lts~$M \eqdef \tup{\Sigma, A, {\to}, s_0}$
and an independence relation $\indep$ on~$M$.
We assume that $\runs M$ is a finite set of finite sequences.
Let $\unf{M,\indep} \eqdef \tup{E, <, {\cfl}, h}$ be the unfolding of~$M$
under~$\indep$, which we will abbreviate as~$\uunf$.
Note that $\uunf$ is finite because of our assumption about~$\runs M$.
We assume that $\uunf$ is the input \pes provided to $\cref{a:a1}$.
Finally, without loss of generality we assume that $\uunf$ contains a special
event~$\bot$ that is a causal predecessor of any other event in~$\uunf$.

\Cref{a:a1} is recursive, each call to \explore{$C,D,A$} 
yields either no recursive call, if the function returns at
\cref{l:ret},
or one single recursive call (\cref{l:left}),
or two (\cref{l:left} and \cref{l:right}).
Furthermore, it is non-deterministic, as $e$ is chosen from either the set
$\en C \setminus D$ or the set $A \cap \en C$,
which in general are not singletons.
As a result, the configurations explored by it
may differ from one execution to the next.

For each run of the algorithm on~$\uunf$
we define the \emph{call graph} explored by \cref{a:a1}
on that run as a directed graph $\tup{B, {\calls}}$
representing the actual exploration of~$\uunf$.
Different executions will in general yield different call graphs.

The nodes~$B$ of the call graph are
4-tuples of the form $\tup{C,D,A,e}$, where $C,D,A$ are the parameters of
a recursive call made to the funtion \explore{$\cdot,\cdot,\cdot$}, and
$e$ is the event selected by the algorithm immediately before \cref{l:left}.
More formally,~$B$ contains exactly all tuples $\tup{C,D,A,e}$ satisfying
that
\begin{itemize}
\item
   $C$, $D$, and $A$ are sets of events of the unfolding $\uunf$;
\item
   during the execution of \explore{$\emptyset,\emptyset,\emptyset$},
   the function \explore{$\cdot,\cdot,\cdot$} has been recursively called
   with $C,D,A$ as, respectively, first, second, and third argument;
\item
   $e \in E$ is the event selected by \explore{$C,D,A$} immediately
   before \cref{l:left} if $\en C \not\subseteq D$.
   When $\en C \subseteq D$ we define $e \eqdef \bot$.
   \footnote{Observe that in this case, if $\en C \subseteq D$, the execution of
   \explore{$C,D,A$} never reaches \cref{l:left}.}
\end{itemize} 
The edge relation of the call graph,
${\calls} \subseteq B \times B$, represents the recursive calls made by
\explore{$\cdot,\cdot,\cdot$}.
Formally, it is the union of two disjoint relations
${\calls} \eqdef {\callsl} \uplus {\callsr}$, defined as follows.
We define that
\[
\tup{C,D,A,e} \callsl \tup{C',D',A',e'}
\text{ ~ and that ~ }
\tup{C,D,A,e} \callsr \tup{C'',D'',A'',e''}
\]
iff the execution of \explore{$C,D,A$} issues a recursive
call to, respectively,
\explore{$C',D',A'$} at \cref{l:left} and
\explore{$C'',D'',A''$} at \cref{l:right}.
Observe that $C'$ and $C''$ will necessarily be different
(as $C' = C \cup \set e$, where $e \notin C$, and $C'' = C$), and therefore the
two relations are disjoint sets.
We distinguish the node
\[
b_0 \eqdef \tup{\emptyset,\emptyset,\emptyset,\bot}
\]
as the \emph{initial node}, also called the \emph{root node}.
Observe that $\tup{B, {\calls}}$ is by definition a weakly connected
digraph, as there is a path from the node $b_0$ to every other node in~$B$.
We refer to
$\callsl$ as the \textit{left-child} relation
and $\callsr$ as the \textit{right child} relation.

\begin{lemma}
\label{l:general}
Let $\tup{C,D,A,e} \in B$ be a state of the call graph. We have that
\begin{itemize}
\item
   $D \cap A = \emptyset$;
   \eqtag{e:basic0}
\item
   event $e$ is such that $e \in \en C \setminus D$;
   \eqtag{e:basic1}
\item
   $C$ is a configuration;
   \eqtag{e:basic2}
\item
   $C \cup A$ is a configuration and $C \cap A = \emptyset$;
   \eqtag{e:basic3}
\item
   $D \subseteq \ex C$;
   \eqtag{e:basic4}
\end{itemize}
\end{lemma}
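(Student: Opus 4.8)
\textbf{Proof proposal for \cref{l:general}.}
The plan is to prove all five statements simultaneously by induction on the structure of the call graph $\tup{B,{\calls}}$, i.e.\ on the length of the unique path from the root node $b_0$ to the node $\tup{C,D,A,e}$ under $\calls$. The base case is $b_0 = \tup{\emptyset,\emptyset,\emptyset,\bot}$: here $D = A = \emptyset$, so \eqref{e:basic0}, \eqref{e:basic3}, and \eqref{e:basic4} are immediate; $C = \emptyset$ is trivially a configuration, giving \eqref{e:basic2}; and \eqref{e:basic1} holds because the node is in $B$ only if $\en\emptyset \not\subseteq \emptyset$, so the algorithm does select some $e \in \en\emptyset \setminus \emptyset$ at \cref{l:a2choose}. (If $\en\emptyset = \emptyset$ then $e = \bot$ by definition and \eqref{e:basic1} is understood vacuously / $\bot \in \en C$ under the $\bot$-convention — this degenerate case needs a line of care.)

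For the inductive step, suppose the claim holds for a node $\b \eqdef \tup{C,D,A,e} \in B$ and consider its children. There are two cases according to whether the child is reached via $\callsl$ or $\callsr$. For the \textbf{left child} $\tup{C',D',A',e'}$ we have $C' = C \cup \set e$, $D' = D$, $A' = A \setminus \set e$. From the induction hypothesis $e \in \en C \setminus D$, so $C' = C \cup \set e$ is a configuration (definition of $\en C$), giving \eqref{e:basic2} for the child; $D' = D \subseteq \ex C \subseteq \ex{C'}$? — here I need the monotonicity fact that $\ex C \subseteq \ex{C'} \cup C'$ and, more precisely, that $D \setminus \set e \subseteq \ex{C'}$, combined with $e \notin D$, to get \eqref{e:basic4}; since $A \cup C$ was a configuration and $e \in A$ or $e \notin A$, removing $e$ from $A$ and adding it to $C$ keeps $C' \cup A' = C \cup A$ a configuration with $C' \cap A' = \emptyset$, giving \eqref{e:basic3}; $D' \cap A' \subseteq D \cap A = \emptyset$ gives \eqref{e:basic0}; and \eqref{e:basic1} for $e'$ follows again from the selection rule at \cref{l:a2choose}/\cref{l:a1choose} exactly as in the base case, using that the child is in $B$. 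For the \textbf{right child} $\tup{C'',D'',A'',e''}$ we have $C'' = C$, $D'' = D \cup \set e$, $A'' = J \setminus C$ where $J$ is the clue returned by $\alternatives{C, D \cup \set e}$ at \cref{l:alt}. Statement \eqref{e:basic2} is inherited unchanged. For \eqref{e:basic4}: $D'' = D \cup \set e$, and $D \subseteq \ex C = \ex{C''}$ by hypothesis while $e \in \en C \subseteq \ex C = \ex{C''}$, so $D'' \subseteq \ex{C''}$. For \eqref{e:basic3} and \eqref{e:basic0}: by \cref{def:clue}, $J$ is a clue to $D \cup \set e$ after $C$ in $U$, so $C \cup J$ is a configuration and $(D \cup \set e) \cap J = \emptyset$; hence $C'' \cup A'' = C \cup (J \setminus C) = C \cup J$ is a configuration, $C'' \cap A'' = \emptyset$, and $D'' \cap A'' = (D \cup \set e) \cap (J \setminus C) = \emptyset$. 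Finally \eqref{e:basic1} for $e''$ is again by the selection rule.

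The main obstacle I anticipate is \eqref{e:basic4} on the \emph{left} child, i.e.\ showing $D \subseteq \ex{C \cup \set e}$ given $D \subseteq \ex C$ and $e \in \en C \setminus D$. This is where one genuinely uses properties of the unfolding semantics rather than bookkeeping: one must argue that adding the event $e$ (which is \emph{not} in $D$) to $C$ does not remove any element of $D$ from the extension set, i.e.\ that for $d \in D$, $\causes d \subseteq C$ implies $\causes d \subseteq C \cup \set e$ (trivial) and $d \notin C \cup \set e$ (this uses $d \ne e$, which holds since $e \notin D$, and $d \notin C$, which holds since $D \subseteq \ex C$ means $D \cap C = \emptyset$). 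So in fact this unfolds into routine set reasoning once one records that $\ex C \cap C = \emptyset$; the only subtlety is being careful that $D \subseteq \ex{C'}$ and not merely $D \subseteq \ex{C'} \cup \set e$. A secondary point of care is the handling of the convention $e = \bot$ when $\en C \subseteq D$: in that case one should either exclude such nodes from the scope of \eqref{e:basic1} or verify $\bot$ satisfies it under whatever convention the paper adopts for $\bot \in \en{\cdot}$; the other four statements go through verbatim since $D$ and $A$ are unaffected by the value of the fourth component.
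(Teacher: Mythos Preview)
Your proposal is essentially correct and follows the same inductive approach as the paper: induction on the path from $b_0$ in the call graph, with the same case split on $\callsl$ versus $\callsr$. The paper organizes things slightly differently---it first derives \eqref{e:basic1} non-inductively from \eqref{e:basic0} (exactly your ``selection rule'' argument), then proves the remaining four items by separate inductions---but your simultaneous induction is equivalent and arguably cleaner.

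One small imprecision to fix: in the left-child case for \eqref{e:basic3} you write $C' \cup A' = C \cup A$, but this fails when $A = \emptyset$ (then $e \notin A$, so $A' = A = \emptyset$ and $C' \cup A' = C \cup \set e \supsetneq C \cup A$). The paper handles this by splitting on whether $A$ is empty: if $A = \emptyset$ then $A' = \emptyset$ and $C' \cup A' = C \cup \set e$ is a configuration since $e \in \en C$; if $A \ne \emptyset$ then $e$ is chosen from $A \cap \en C$, so $e \in A$ and your equation $C' \cup A' = C \cup A$ is valid. Your ``$e \in A$ or $e \notin A$'' suggests you see the split, but the argument as written only covers $e \in A$.

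Your treatment of \eqref{e:basic4} for the left child (the paragraph you flag as the ``main obstacle'') is exactly the paper's argument and is indeed just set bookkeeping once you use $e \notin D$. Your remark about the $e = \bot$ convention is well-placed; the paper's own proof of \eqref{e:basic1} implicitly assumes the algorithm actually selects an event, so the statement should be read modulo that caveat.
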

\begin{proof}
   Proving~\cref{e:basic1} is immediate, assuming that~\cref{e:basic0} holds.
   In \cref{a:a1},
   observe both branches of the conditional statement where~$e$ is selected.
   If~$e$ is slected by the \textit{then} branch,
   clearly~$e \in \en C \setminus D$.
   If~$e$ is selected by the \textit{else} branch,
   clearly $e \in \en C$. But, by~\cref{e:basic0} $e \notin D$, as~$e \in A$
   and~$A$ is disjoint with~$D$. Therefore $e \in \en C \setminus D$.
   In both cases~\cref{e:basic1} holds, what we wanted to prove.

   All remaining items,
   \cref{e:basic0,e:basic2,e:basic3,e:basic4}, will be
   shown by induction on the length $n \ge 0$ of any path
   \[
   b_0 \calls b_1 \calls \ldots \calls b_{n-1} \calls b_n
   \]
   on the call graph, starting from the initial node
   and leading to 
   $b_n \eqdef \tup{C, D, A, e}$
   For $i \in \set{0, \ldots, n}$ we define
   $\tup{C_i,D_i,A_i,e_i} \eqdef b_i$.

   We start showing \cref{e:basic0}.
   \emph{Base case.}
   $n = 0$ and $D = A = \emptyset$. The result holds.
   \emph{Step.}
   Assume that $D_{n-1} \cap A_{n-1} = \emptyset$ holds.
   We have
   \[
      D \cap A = D_n \cap A_n = D_{n-1} \cap (A_{n-1} \setminus \set e) =
      D_{n-1} \cap A_{n-1} = \emptyset
   \]
   because removing event~$e$ from~$A$ will not increase the number of events
   shared by~$A$ and~$D$.

   We now show \cref{e:basic2}, also by induction on~$n$.
   \emph{Base case.}
   $n = 0$ and $C = \emptyset$. The set $\emptyset$ is a configuration.
   \emph{Step.}
   Assume $C_{n-1}$ is a configuration. If
   $b_{n-1} \callsl b_n$,
   then $C = C_{n-1} \cup \set e$ for some event $e \in \en C$, as stated
   in \cref{e:basic1}. By definition,
   $C$ is a configuration.
   If 
   $b_{n-1} \callsr b_n$,
   then $C = C_{n-1}$. In any case $C$ is a configuration.

   We show \cref{e:basic3}, by induction on $n$.
   \emph{Base case.}
   $n = 0$. Then $C = \emptyset$ and $A = \emptyset$. Clearly $C \cup A$ is a
   configuration and $C \cup A = \emptyset$.
   \emph{Step.}
   Assume that $C_{n-1} \cup A_{n-1}$ is a configuration and that
   $C_{n-1} \cap A_{n-1} = \emptyset$.
   We have two cases.
   \begin{itemize}
   \item
      Assume that $b_{n-1} \callsl b_n$.
      If $A_{n-1}$ is empty, then $A$ is empty as well. Clearly
      $C \cup A$ is a configuration and $C \cap A$ is empty.
      If $A_{n-1}$ is not empty, then
      $C = C_{n-1} \cup \set e$ and
      $A = A_{n-1} \setminus \set e$, for some
      $e \in A_{n-1}$, and we have
      \[
         C \cup A =
         (C_{n-1} \cup \set e ) \cup (A_{n-1} \setminus \set e ) =
         C_{n-1} \cup A_{n-1},
      \]
      so $C \cup A$ is a configuration as well.
      We also have that $C \cap A = C_{n-1} \cap A_{n-1}$ (recall that
      $e \notin C$), so $C \cap A$ is empty.
   \item
      Assume that $b_{n-1} \callsr b_n$ holds.
      Then we have $C = C_{n-1}$ and also
      $A = J \setminus C_{n-1}$ for some
      $J \in \alternatives{$C_{n-1},D \cup \set e$}$.
      Since~$J$ is a clue,
      from \cref{def:alt-fun,def:clue},
      we know that $C_{n-1} \cup J$ is a configuration.
      As a result,
      \[
         C \cup A =
         C_{n-1} \cup (J \setminus C_{n-1}) =
         C_{n-1} \cup J,
      \]
      and therefore $C \cup A$ is a configuration.
      Finally, by construction of $A$ at \cref{l:right},
      we clearly have $C \cap A = \emptyset$.
   \end{itemize}

   We show \cref{e:basic4}, again, by induction on $n$.
   \emph{Base case}.
   $n = 0$ and $D = \emptyset$. Then \cref{e:basic4} clearly holds.
   \emph{Step.}
   Assume that \cref{e:basic4} holds for $\tup{C_i,D_i,A_i,e_i}$, with
   $i \in \set{0, \ldots, n - 1}$. We show that it holds for $b_n$.
   As before, we have two cases.
   \begin{itemize}
   \item
      Assume that $b_{n-1} \callsl b_n$.
      We have that
      $D = D_{n-1}$ and that
      $C = C_{n-1} \cup \set{e_{n-1}}$.
      We need to show that for all $e' \in D$
      we have $\causes{e'} \subseteq C$ and $e' \notin C$.
      By induction hypothesis we know that
      $D = D_{n-1} \subseteq \ex{C_{n-1}}$, so clearly
      $\causes{e'} \subseteq C_{n-1} \subseteq C$.
      We also have that $e' \notin C_{n-1}$, so we only need to check that
      $e' \ne e_{n-1}$.
      By~\cref{e:basic1} applied to~$b_{n-1}$ we have that
      $e_{n-1} \notin D_{n-1} = D$. That means that $e' \ne e_{n-1}$.
   \item
      Assume that $b_{n-1} \callsr b_n$.
      We have that $D = D_{n-1} \cup \set{e_{n-1}}$, and by hypothesis we
      know that $D_{n-1} \subseteq \ex{C_{n-1}} = \ex C$.
      As for $e_{n-1}$, by \cref{e:basic1} we know that
      $e_{n-1} \in \en{C_{n-1}} = \en C \subseteq \ex C$.
      As a result, $D \subseteq \ex C$.
   \end{itemize}
\end{proof}

\begin{lemma}
   Let $b \eqdef \tup{C,D,A,e}$ and $b' \eqdef \tup{C',D',A',e'}$ be
   two nodes of the call graph such that $b \calls b'$.
   Then
   \begin{itemize}
      \item $C \subseteq C'$ and $D \subseteq D'$; \eqtag{e:step1}
      \item if $b \callsl b'$, then $C \subsetneq C'$; \eqtag{e:step2}
      \item if $b \callsr b'$, then $D \subsetneq D'$. \eqtag{e:step3}
   \end{itemize}
\end{lemma}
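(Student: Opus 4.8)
The plan is to prove all three items by a single case split on the type of the edge $b \calls b'$. By construction ${\calls} = {\callsl} \uplus {\callsr}$, so either $b \callsl b'$ or $b \callsr b'$, and in each case the parameters $C', D', A'$ of the target node are read off verbatim from the recursive call that \explore{$C,D,A$} issues. The only external input I would need is \cref{l:general}, specifically \cref{e:basic1}, which tells us that the event $e$ in the source tuple satisfies $e \in \en C \setminus D$; everything else is immediate.

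First I would treat the left-child case $b \callsl b'$. Here the edge corresponds to the recursive call at \cref{l:left}, i.e.\ \explore{$C \cup \set e, D, A \setminus \set e$}, so $C' = C \cup \set e$, $D' = D$ and $A' = A \setminus \set e$. This gives $C \subseteq C'$ and $D \subseteq D'$ at once, establishing \cref{e:step1} on this branch. For the strict inclusion of \cref{e:step2} I need $e \notin C$: by \cref{e:basic1} we have $e \in \en C \subseteq \ex C$, and extensions of a configuration are by definition not members of it, so $e \notin C$ and hence $C \subsetneq C'$.

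Next I would treat the right-child case $b \callsr b'$. Now the edge corresponds to the recursive call at \cref{l:right}, i.e.\ \explore{$C, D \cup \set e, J \setminus C$} for the clue $J$ returned by \alternatives{$C, D \cup \set e$}, so $C' = C$ and $D' = D \cup \set e$. Hence $C \subseteq C'$ trivially and $D \subseteq D'$, finishing \cref{e:step1}. For \cref{e:step3} I need $e \notin D$, which is exactly the remaining part of \cref{e:basic1}, namely $e \in \en C \setminus D$; therefore $D \subsetneq D'$. Combining the two cases yields the lemma.

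I do not expect any real obstacle: the argument is just an unfolding of the algorithm's two recursive calls together with an already-proved fact. The one subtlety worth stating explicitly is that a $\callsr$-edge exists only when the test at \cref{l:alt} succeeds, so that \alternatives actually returns a clue $J$; this is precisely the situation handled above, hence the case analysis is exhaustive and the expression $J \setminus C$ is well-defined.
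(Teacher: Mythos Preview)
Your proposal is correct and follows exactly the same route as the paper: a two-case split on ${\callsl}$ versus ${\callsr}$, reading off $C',D'$ from the corresponding recursive call. Your write-up is in fact more careful than the paper's, which simply states ``all three statements hold'' without spelling out that the strict inclusions require $e\notin C$ and $e\notin D$; your explicit appeal to \cref{e:basic1} fills that gap cleanly.
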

\begin{proof}
   If $b \callsl b'$, then
   $C' = C \cup \set e$ and
   $D' = D$.
   Then all the three statements hold.
   If $b \callsr b'$, then
   $C' = C$ and
   $D' = D \cup \set e$.
   Similarly, all the three statements hold.
\end{proof}

\begin{lemma}
   \label{l:twoconfs}
   If $C \subseteq C'$ are two finite configurations, then
   $\en C \cap (C' \setminus C) = \emptyset$
   iff
   $C' \setminus C = \emptyset$.
\end{lemma}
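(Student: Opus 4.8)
The equivalence is proved by establishing the two implications separately, and only one of them requires any work. The direction ``$C' \setminus C = \emptyset \implies \en C \cap (C' \setminus C) = \emptyset$'' is immediate, since the intersection of the empty set with anything is empty. The plan for the converse is to prove the contrapositive: assuming $C' \setminus C \ne \emptyset$, I will exhibit an event that lies in $\en C \cap (C' \setminus C)$, witnessing that this intersection is non-empty.

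First I would use finiteness of $C'$ (hence of $C' \setminus C$) together with the fact that $<$ is a strict partial order to pick a $<$-minimal event $e$ in the non-empty set $C' \setminus C$. The core of the argument is then to check that $e \in \en C$, i.e.\ that $e \in \ex C$ and $C \cup \set e \in \conf\les$. For $e \in \ex C$: by hypothesis $e \notin C$, so it remains to show $\causes e \subseteq C$. Take any $e' < e$. Since $e \in C'$ and $C'$ is causally closed, $e' \in C'$; and if $e'$ were in $C' \setminus C$ this would contradict the $<$-minimality of $e$ (as $e' < e$), so $e' \in C$. Hence $\causes e \subseteq C$. For $C \cup \set e \in \conf\les$: causal closure follows from $\causes e \subseteq C$ and the causal closure of $C$; conflict-freeness follows because $C$ is conflict-free and, since $e \in C'$ and $C \subseteq C'$ with $C'$ conflict-free, $e$ is in conflict with no event of $C$.

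Combining these, $e \in \en C$ and $e \in C' \setminus C$, so $\en C \cap (C' \setminus C) \ne \emptyset$, which is exactly the contrapositive of the remaining implication. I do not expect any genuine obstacle here; the only point deserving care is making the minimality argument deliver $\causes e \subseteq C$ in full (rather than merely avoiding the lower part of $C' \setminus C$), and this is handled by combining minimality of $e$ with the causal closure of $C'$ as above.
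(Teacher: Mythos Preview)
Your proposal is correct and follows essentially the same argument as the paper: pick a $<$-minimal event $e$ in the non-empty set $C' \setminus C$, use causal closure of $C'$ together with minimality to get $\causes e \subseteq C$, and conclude $C \cup \set e$ is a configuration because it is a subset of the configuration $C'$. The paper's write-up is slightly terser (it jumps directly to ``$C \cup \set{e''}$ is a configuration as $C \cup \set{e''} \subseteq C'$''), but the content is the same.
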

\begin{proof}
If there is some $e \in \en C \cap (C' \setminus C)$,
then $e \notin C$ and $e \in C'$, so $C' \setminus C$ is not empty.
If there is some $e' \in C' \setminus C$, then there is some $e''$ event
that is $<$-minimal in $C' \setminus C$.
As a result, $\causes{e''} \subseteq C$. Since $e'' \notin C$
and $C \cup \set{e''}$ is a configuration
(as $C\cup \set{e''} \subseteq C'$), we have that $e'' \in \en C$.
Then $\en C \cap (C' \setminus C)$ is not empty.
\end{proof}

\section{Termination Proofs}
\label{sec:termination}

\begin{lemma}
   \label{l:finitepath}
   Any path $b_0 \calls b_1 \calls b_2 \calls \ldots$
   in the call graph starting from $b_0$ is finite.
\end{lemma}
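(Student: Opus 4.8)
The plan is to equip the call graph with a measure that strictly increases along every edge and is bounded, so that no path starting from $b_0$ can be infinite. For a node $b \eqdef \tup{C,D,A,e}$ I would take the measure $\mu(b) \eqdef |C| + |D|$, the combined size of the already–explored configuration and the disabled set.

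First I would argue that $\mu$ is bounded. By \cref{l:general}, for every node $\tup{C,D,A,e} \in B$ the set $C$ is a configuration of $\uunf \eqdef \tup{E,<,{\cfl},h}$ and $D \subseteq \ex C$; in particular $C, D \subseteq E$. Since $\runs M$ is a finite set of finite sequences, $E$ is finite, hence $\mu(b) \le 2|E|$ for every node $b$ of the call graph. Next I would show that $\mu$ strictly increases along edges. Let $b \eqdef \tup{C,D,A,e}$ and $b' \eqdef \tup{C',D',A',e'}$ with $b \calls b'$. By \cref{e:step1} we have $C \subseteq C'$ and $D \subseteq D'$, so $\mu(b) \le \mu(b')$. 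If the edge is a left-child edge, then by \cref{e:step2} $C \subsetneq C'$ while $D' = D$; concretely $C' = C \cup \set e$ with $e \notin C$ (as $e \in \en C$), so $\mu(b') = \mu(b) + 1$. If the edge is a right-child edge, then by \cref{e:step3} $D \subsetneq D'$ while $C' = C$; concretely $D' = D \cup \set e$ with $e \notin D$ by \cref{e:basic1}, so again $\mu(b') = \mu(b) + 1$. In either case $\mu(b') > \mu(b)$.

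Finally, given any path $b_0 \calls b_1 \calls b_2 \calls \ldots$ starting from $b_0$, the values $\mu(b_0) < \mu(b_1) < \mu(b_2) < \cdots$ form a strictly increasing sequence of natural numbers bounded above by $2|E|$, and such a sequence must be finite; hence the path has length at most $2|E|$ and is finite. I do not anticipate a genuine obstacle here: the only points that need care are that $C$ and $D$ always live inside the finite event set $E$ (immediate from \cref{l:general} together with the finiteness of $\uunf$) and that each step increases $\mu$ \emph{strictly} rather than merely weakly, which is exactly what the strict-inclusion clauses \cref{e:step2,e:step3} of the preceding lemma — supplemented by \cref{e:basic1} to know $e \notin D$ on a right edge — provide.
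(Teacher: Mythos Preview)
Your proof is correct and rests on the same ingredients as the paper's: the finiteness of the event set~$E$ together with the strict growth of~$C$ along left edges \cref{e:step2} and of~$D$ along right edges \cref{e:step3}. The paper presents the argument as a two-phase contradiction (first bounding the number of~$\callsl$ steps, then bounding the tail of~$\callsr$ steps via $D_i \subseteq \ex{C_k}$), whereas you package both phases into the single ranking function $\mu(b) = |C| + |D| \le 2|E|$; this is a cleaner formulation but not a genuinely different idea.
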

\begin{proof}
   By contradiction.
   Assume that $b_0 \calls b_1 \calls \ldots$ is an infinite path in the
   call graph.
   For $0 \le i$, let $\tup{C_i,D_i,A_i,e_i} \eqdef b_i$.
   Recall that $\uunf$ has finitely many events, finitely many finite
   configurations, and no infinite configuration.
   Now, observe that the number of times that $C_i$ and $C_{i+1}$ are
   related by $\callsl$ rather than $\callsr$ is finite, since every time
   \explore{$\cdot,\cdot,\cdot$} makes a recursive call at \cref{l:left} it
   adds one event to $C_i$, as stated by~\cref{e:step2}.
   More formally, the set
   \[ L \eqdef \set{i \in \N \colon C_i \callsl C_{i+1}} \]
   is finite.
   As a result it has a maximum, and its successor
   $k \eqdef 1 + \max_< L$ is an index in the path such that
   for all $i \ge k$ we have $C_i \callsr C_{i+1}$, \ie, the function only
   makes recursive calls at \cref{l:right}.
   We then have that $C_i = C_k$, for $i \ge k$, and by~\cref{e:basic4},
   that $D_i \subseteq \ex{C_k}$.
   Since~$\uunf$ is finite, note that~$\ex{C_k}$ is finite as well.
   But, as a result of \cref{e:step2} the sequence
   \[
   D_k \subsetneq D_{k+1} \subsetneq D_{k+2} \subsetneq \ldots
   \]
   is an infinite increasing sequence.
   This is a contradiction, as for sufficiently large $j \ge 0$ we will
   have that $D_{k+j}$ will be larger than $\ex{C_k}$, yet
   $D_{k+j} \subseteq \ex{C_k}$.
\end{proof}

\thmtermination*
\begin{proof}
   The statement of the theorem refers to~\cref{a:a1}, but we instead prove it
   for~\cref{a:a1}.
   Remark that \cref{a:a1} makes calls to two functions, namely,
   \remove{$\cdot$}
   and
   \alternatives{$\cdot, \cdot$}.
   Clearly both of them terminate
   (the loop in \remove{$\cdot$} iterates over a finite set).
   Since we gave no algorithm to compute \alternatives{$\cdot$},
   we will assume we employ one that terminates on every input.

   Now, observe that there is no loop in \cref{a:a1}.
   Thus any non-terminating execution of \cref{a:a1} must perform a
   non-terminating sequence of recursive calls, which entails the existence
   of an infinite path in the call graph associated to the execution.
   Since, by~\cref{l:finitepath}, no infinite path exist in the call graph,
   \cref{a:a1} always terminates.
\end{proof}

\section{Completeness Proofs}
\label{sec:completeness}

\begin{lemma}
   \label{l:compl.step}
   Let $b \eqdef \tup{C,D,A,e} \in B$ be a node in the call graph and
   $\hat C \subseteq E$ an arbitrary maximal configuration of $\uunf$
   such that $C \subseteq \hat C$ and $D \cap \hat C = \emptyset$.
   Then exactly one of the following statements holds:
   \begin{itemize}
   \item
     Either $C$ is a maximal configuration of $\uunf$, or
   \item
     $C$ is not maximal but $\en C \subseteq D$, or
   \item
     $e \in \hat C$ and~$b$ has a left child, or
   \item
     $e \notin \hat C$ and~$b$ has a right child.
   \end{itemize}
\end{lemma}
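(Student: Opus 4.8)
The plan is to case-split on the guard $\en C \subseteq D$ tested at \cref{l:ret} of \cref{a:a1}, and in each branch identify which of the four listed alternatives is realised while checking the other three fail, so that ``exactly one'' comes out automatically; the hypotheses on $\hat C$ are only really used in one subcase. \emph{When $\en C \subseteq D$:} the call \explore{$C,D,A$} returns at \cref{l:ret} and issues no recursive call, so the node~$b$ has neither a left nor a right child, and the last two alternatives are impossible. Then: if $C$ is $\subseteq$-maximal we are in the first alternative (and $C$ is not a non-maximal configuration, so the second fails); if $C$ is not maximal then $\en C$ is nonempty and still contained in $D$, which is exactly the second alternative. So exactly one of the first two holds.

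\emph{When $\en C \not\subseteq D$:} reading \cref{a:a1}, the distinguished event $e$ of the node is the one selected just before \cref{l:left}, and by \cref{e:basic1} it lies in $\en C \setminus D$; moreover \cref{l:left} \emph{unconditionally} makes the recursive call \explore{$C\cup\set e, D, A\setminus\set e$}, which the call graph records, so $b$ has a left child. Since $\en C \not\subseteq D$, the first two alternatives are out, and it remains to decide between the last two according to whether $e \in \hat C$. If $e \in \hat C$, the third alternative holds and the fourth fails (its clause $e \notin \hat C$ is false). If $e \notin \hat C$, I claim $b$ has a right child, which gives the fourth alternative (and rules out the third). For this it suffices that the guard of \cref{l:alt} fires, i.e.\ that $\alternatives{C, D\cup\set e}$ returns a nonempty set; by \cref{def:alt-fun} this happens whenever $\unf{P,\indep}$ has a $\subseteq$-maximal configuration $\tilde C$ with $C \subseteq \tilde C$ and $\tilde C \cap (D \cup \set e) = \emptyset$. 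Taking $\tilde C \eqdef \hat C$ works: $\hat C$ is maximal and contains $C$ by hypothesis, $\hat C \cap D = \emptyset$ by hypothesis, and $e \notin \hat C$ in this subcase, so $\hat C \cap (D\cup\set e)=\emptyset$. (The well-formedness condition $C \cap (D \cup \set e) = \emptyset$ for this query follows from $D \subseteq \ex C$, \cref{e:basic4}, and $e \in \en C \subseteq \ex C$.) Hence \cref{l:right} makes the call recorded as the right child of~$b$.

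The only genuinely non-routine point is this last subcase: the content of the lemma is essentially that the maximal configuration witnessing $b$ in the completeness argument survives, unchanged, as a valid witness for the alternatives query raised at \cref{l:alt} after $D$ has been enlarged by~$e$, and the clause $e \notin \hat C$ is exactly what makes that go through. Everything else is bookkeeping: reading the control flow of \cref{a:a1} against the definition of the call graph and the invariants of \cref{l:general}, plus checking pairwise exclusivity of the four alternatives (the first two force no children; the last two force a child; first vs.\ second is maximal vs.\ non-maximal; third vs.\ fourth is $e \in \hat C$ vs.\ $e \notin \hat C$).
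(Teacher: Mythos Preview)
Your proof is correct and follows essentially the same approach as the paper's, the only cosmetic difference being that you split first on the guard $\en C \subseteq D$ and then on maximality of~$C$, whereas the paper splits first on maximality and then on the guard; the key step in both is identical, namely using $\hat C$ itself as the witnessing maximal configuration for \cref{def:alt-fun} when $e\notin\hat C$.
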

\begin{proof}
	If $C$ is maximal, then the first statement holds and~$b$ has no successor in
   the call graph, so none of the other three statements hold and we are done.

   So assume that $C$ is not maximal.
   Then $\en C \ne \emptyset$.
   Now, if $\en C \subseteq D$ holds then the second statement is true and none
   of the others is
   (as \cref{a:a1} does not make any recursive call in this case).

   So assume also that $\en C \not\subseteq D$.
   That implies that~$b$ has at least one left child.
   If $e \in \hat C$, then we are done, as the second statement holds and none
   of the others hold.

   So finally, assume that $e \notin \hat C$, we need to show that the third
   statement holds, \ie that~$b$ has right child.
   By \cref{def:alt-fun} we know that the set of clues returned by the call to
   \alternatives{$C, D \cup \set e$} will be non-empty, as
   there exists a maximal configuration $\hat C$ such that
   $C \subseteq \hat C$ (by hypothesis) and
   \[
      \hat C \cap (D \cup \set e) =
      (\hat C \cap D) \cup (\hat C \cap \set e) =
      \hat C \cap D =
      \emptyset.
   \]
   This means that \cref{a:a1} will make a recursive call at line
   \cref{l:right} and~$b$ will have a right child.
   This shows that the last statement holds.
   And clearly none of the other statements holds in this case.
\end{proof}

\begin{lemma}
   \label{l:compl.findit}
   For any node $b \eqdef \tup{C,D,\cdot,e} \in B$ in the call graph and any
   maximal configuration $\hat C \subseteq E$ of $\uunf$,
   if
   \[ C \subseteq \hat C \text{ and } D \cap \hat C = \emptyset, \]
   then there is a node~$b' \eqdef \tup{C',\cdot,\cdot,\cdot} \in B$ such that
   $b \calls^* b'$ and $\hat C = C'$.
\end{lemma}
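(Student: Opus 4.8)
The plan is to proceed by induction on the size of the ``gap'' $\hat C \setminus C$, which is a finite set since $\uunf$ is finite. The base case is $\hat C \setminus C = \emptyset$, i.e. $C = \hat C$ (recall $C \subseteq \hat C$ by hypothesis): then $b' \eqdef b$ works, since $b \calls^* b$ trivially and $\hat C = C$. For the inductive step I would assume $\hat C \setminus C \ne \emptyset$ and show that $b$ has a suitable child $b'' \eqdef \tup{C'',D'',\cdot,\cdot}$ that still satisfies the two invariants $C'' \subseteq \hat C$ and $D'' \cap \hat C = \emptyset$, while $\hat C \setminus C''$ is strictly smaller (or at least: the induction is well-founded on a pair ordering, see below); then the induction hypothesis applied to $b''$ and $\hat C$ yields the desired $b'$ with $b'' \calls^* b'$, hence $b \calls^* b'$ by prepending the edge $b \calls b''$.

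The engine for the inductive step is \cref{l:compl.step}, which I would invoke on $b$ and $\hat C$. Since $\hat C \setminus C \ne \emptyset$, $C$ is not a maximal configuration, so the first bullet of \cref{l:compl.step} fails. I claim the second bullet also fails: by \cref{l:twoconfs} applied to $C \subseteq \hat C$, since $\hat C \setminus C \ne \emptyset$ there is an event in $\en C \cap (\hat C \setminus C)$; this event lies in $\en C$ but not in $D$ (because $D \cap \hat C = \emptyset$), so $\en C \not\subseteq D$. Hence \cref{l:compl.step} forces one of the last two bullets: either $e \in \hat C$ and $b$ has a left child $b''$, with $C'' = C \cup \set e$ and $D'' = D$; or $e \notin \hat C$ and $b$ has a right child $b''$, with $C'' = C$ and $D'' = D \cup \set e$. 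In the left-child case, $C'' = C \cup \set e \subseteq \hat C$ since $e \in \hat C$, and $D'' = D$ is still disjoint from $\hat C$; moreover $\hat C \setminus C''$ is strictly smaller than $\hat C \setminus C$. In the right-child case, $C'' = C \subseteq \hat C$ still holds, and $D'' = D \cup \set e$ is disjoint from $\hat C$ because $e \notin \hat C$; here $\hat C \setminus C''$ does not shrink, but $D'' \supsetneq D$.

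The main obstacle is therefore making the induction actually well-founded: a right child does not decrease $\hat C \setminus C$. I would handle this exactly as in the termination argument (\cref{l:finitepath}): order pairs $\tup{C,D}$ by the measure that counts, first, the number of events in $\hat C \setminus C$ still to be added, and second, the number of events in $\ex C \setminus D$ still available to be moved into $D$ — using \cref{e:basic4} that $D \subseteq \ex C$, so $\ex C \setminus D$ is a finite set that strictly shrinks on each right child (by \cref{e:step3}) while $C$ is unchanged. A left child strictly decreases the first component (and may reset the second to something still finite). This lexicographic measure is well-founded, so iterating \cref{l:compl.step} terminates at a node $b'$ with $\hat C \setminus C' = \emptyset$, i.e.\ $C' = \hat C$, and the path $b \calls^* b'$ is exactly the chain of children produced. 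This is essentially a finite descent along one root-to-leaf-like branch of the call graph guided by $\hat C$, and the only care needed is that the two relations $\callsl,\callsr$ interleave in a way that still makes progress, which the lexicographic measure certifies.
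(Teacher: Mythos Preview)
Your proof is correct and follows essentially the same approach as the paper: iterate \cref{l:compl.step} while maintaining the invariants $C \subseteq \hat C$ and $D \cap \hat C = \emptyset$, using \cref{l:twoconfs} to rule out the sleep-set-blocked case. The only difference is in the termination argument: the paper simply appeals to \cref{l:finitepath} (every path in the call graph is finite), whereas you supply an explicit lexicographic measure $(|\hat C \setminus C|, |\ex C \setminus D|)$, which is a self-contained but equivalent justification.
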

\begin{proof}
   The proof works by explicitly constructing a path from~$b$ to~$b'$ using an
   iterated application of~\cref{l:compl.step}.
   
   Since $C \subseteq \hat C$ and $D \cap \hat C = \emptyset$, we can apply
   \cref{l:compl.step} to $b$ and $\hat C$ and conclude that
   exactly one of the four statements in that Lemma will be true at~$b$.
   If~$C$ is maximal, then necessarily $C = \hat C$ and we are done.
   If~$C$ is not maximal,
   then it must be the case that $\en C \not\subseteq D$
   and~$b$ has at least one left child.
   This is because by~\cref{l:twoconfs} we have that
   \[
      \en C \cap (\hat C \setminus C) = \emptyset
      \text{ iff }
      \hat C \setminus C = \emptyset.
   \]
   Since~$C$ is not maximal $\hat C \setminus C \ne \emptyset$ and we see that
   $\en C \cap \hat C$ must be non-empty.
   Now, since~$\hat C$ and~$D$ are disjoint, the event(s) in
   $\en C \cap \hat C$ are not in~$D$, and so
   $\en C$ contains events which are not contained in~$D$.

   Since $\en C \not\subseteq D$ we have that the second statement
   in~\cref{l:compl.step} does not hold, and so either the third or the
   fourth statement have to be hold.

   Now, $b$ has a left child and two cases are possible,
   either $e \in \hat C$ or not.
   If $e \in \hat C$ we
   let~$b_1 \eqdef \tup{C_1, D_1, \cdot, e_1}$ be the left child of~$b$,
   with $C_1 \eqdef C \cup \set e$ and $D_1 \eqdef D$.
   If $e \notin \hat C$, then only the last statement of \cref{l:compl.step} can
   hold and we know that~$b$ has a right child.
   Let~$b_1 \eqdef \tup{C_1, D_1, \cdot, e_1}$,
   with $C_1 \eqdef C$ and $D_1 \eqdef D \cup \set e$ be that child.
   Observe that in both cases $C_1 \subseteq \hat C$ and
   $D_1 \cap \hat C = \emptyset$.

   If $C_1$ is maximal, then necessarily $C_1 = \hat C$, we
   take~$b' \eqdef b_1$ and we have finished.
   If not, we can reapply \cref{l:compl.step} at~$b_1$ and
   make one more step into one of the children $b_2$ of~$b_1$.
   If $C_2$ is still not maximal (thus different from $\hat C$)
   we need to repeat the argument starting from~$b_2$ only a finite
   number~$n$ of times until we reach a
   node~$b_n \eqdef \tup{C_n, D_n, \cdot, \cdot}$ where $C_n$ is a maximal
   configuration.
   This is because every time we repeat the argument on a non-maximal
   node~$b_i$ we advance one step down in the call graph,
   and by~\cref{l:finitepath} all paths in the graph starting from the root are
   finite.
   So eventually we find a leaf node~$b_n$ where $C_n$ is maximal and satisfies
   $C_n \subseteq \hat C$.
   This implies that $C_n = \hat C$, and we can take $b' \eqdef b_n$.
\end{proof}

\thmcompleteness*
\begin{proof}
   We need to show that for every maximal
   configuration~$\hat C \subseteq E$ we can find a node
   $b \eqdef \tup{C, \cdot, \cdot, \cdot}$ in~$B$
   such that $\hat C = C$.
   This is a direct consequence of \cref{l:compl.findit}.
   Consider the root node of the tree,
   $b_0 \eqdef \tup{C_0,D_0,A_0,\bot}$, where
   $C_0 = D_0 = A_0 = \emptyset$.
   Clearly $C_0 \subseteq \hat C$ and
   $D_0 \cap \hat C = \emptyset$, so
   \cref{l:compl.findit} applies to $\hat C$ and $b_0$,
   and establishes the existence of the aforementioned node~$b$.
\end{proof}


\section{Complexity Proofs}
\label{sec:complexity-proofs}

\pesaltnp*
\begin{proof}
We first prove that the problem is in~NP. Let us non-deterministically choose a configuration
$J \subseteq E$. We then check that $J$ is an alternative to~$D$ after~$C$:
\begin{itemize}
  \item $J \cup C$ is a configuration can be checked in linear time: The first condition
  for $J \cup C$ to be a configuration is that $\forall e \in J \cup C: \causes e \subseteq
  J \cup C$. Since $J$ is a configuration, this condition holds for all $e \in J$.
  Similarly, as $C$ is a configuration, it also holds for all $e \in C$. The second
  condition is that $\forall e_1, e_2 \in J \cup C : \lnot (e_1 \cfl e_2)$. This is
  true
  for $e_1,e_2 \in J$ and $e_1,e_2 \in C$. If $e_1 \in J \land e_2 \in C$ (or the
  converse), we have to effectively check that $\lnot (e_1 \cfl e_2)$. Checking if
  two events $e_1$ and $e_2$ are in conflict is linear on the size
  of $[e_1] \cup [e_2]$.
  \item Every event $e_1 \in D$ must be in immediate conflict with an event $e_2 \in
  J$. Thus, there are at most $|D|\cdot|J|$ checks to perform, each in linear time
  on the size of $[e_1] \cup [e_2]$. Hence, this is in $O(n^2)$.
\end{itemize}

We now prove that the problem is NP-hard, by reduction from the 3-SAT problem.
Let $\set{v_1, \ldots, v_n}$ be a set of Boolean variables.
Let $\phi \eqdef c_1 \land \ldots \land c_m$ be a 3-SAT formula, where
each clause $c_i \eqdef l_i \lor l'_i \lor l''_i$ comprises three literals.
A literal is either a Boolean variable $v_i$ or its negation $\setneg{v_i}$.

Formula $\phi$ can be modelled by a PES $\les_\phi \eqdef \tup{E,{<},\cfl,h}$ constructed
as follows:
\begin{itemize}
  \item For each variable $v_i$ we create two events $t_i$ and $f_i$ in~$E$, and put
    them in immediate conflict, as they correspond to the satisfaction of $v_i$ and
    $\setneg{v_i}$, respectively.
  \item The set $D$ of events to disable contains one event $d_j$ per clause $c_j$.
  Such a $d_j$ has to be in immediate conflict with the events modelling the literals
  in clause $c_j$. Hence it is in conflict with 1, 2, or 3 $t$ or $f$ events.
  \item There is no causality: $< \eqdef\emptyset$.
  \item The labelling function shows the correspondence between the events and the
  elements of formula $\phi$, \ie{} $\forall t_i\in E: h(t_i)=v_i$, $\forall f_i\in
  E: h(f_i)=\setneg{v_i}$ and $\forall d_j\in E: h(d_j)=c_j$.
\end{itemize}
We now show that $\phi$ is satisfiable iff there exists an alternative $J$ to~$D$ after
$C \eqdef \emptyset$ in~$E$. This alternative is constructed by selecting for each
event $d_j \in D$ and event $e$ in immediate conflict. By construction of $\les_\phi$,
$h(e)$ is a literal in clause $h(d_j)=c_j$. Moreover, $C \cup J = J$ must be a configuration.
The causal closure is trivially satisfied since ${<} \eqdef \emptyset$. The conflict-freeness
implies that if $t_i\in J$ then $f_i\not\in J$ and vice-versa. Therefore, formula
$\phi$ is satisfiable iff an alternative~$J$ to~$D$ exists.

The construction of $\les_\phi$ is illustrated in~\figref{exsat} for:
$$\phi\eqdef
\underbrace{(x_1 \lor \setneg{x_2} \lor x_3)}_{c_1} \land \underbrace{(\setneg{x_1}
\lor \setneg{x_2})}_{c_2} \land \underbrace{(x_1 \lor \setneg{x_3})}_{c_3}$$

\begin{figure}[ht]
\centering

\begin{tikzpicture}[node distance=1cm,
	square/.style={regular polygon,regular polygon sides=4,inner sep=0,draw}]

\node at (-0.5,-2) [square,label=below:$t_1$] (A1) {$x_1$};
\node[square,label=below:$f_1$,right of=A1] (B1) {$\setneg{x_1}$};
\draw[dotted] (A1) -- (B1);
\node at (1.5,-2) [square,label=below:$t_2$] (A2) {$x_2$};
\node[square,label=below:$f_2$,right of=A2] (B2) {$\setneg{x_2}$};
\draw[dotted] (A2) -- (B2);
\node at (3.5,-2) [square,label=below:$t_3$] (A3) {$x_3$};
\node[square,label=below:$f_3$,right of=A3] (B3) {$\setneg{x_3}$};
\draw[dotted] (A3) -- (B3);

\begin{scope}[node distance=2cm]
\node[square,label=above:$d_1$] (D1) {$c_1$};
\draw[dotted] (D1) -- (A1);
\draw[dotted] (D1) -- (B2);
\draw[dotted] (D1) -- (A3);
\node[square,label=above:$d_2$,right of=D1] (D2) {$c_2$};
\draw[dotted] (D2) -- (B1);
\draw[dotted] (D2) -- (B2);
\node[square,label=above:$d_3$,right of=D2] (D3) {$c_3$};
\draw[dotted] (D3) -- (A1);
\draw[dotted] (D3) -- (B3);
\end{scope}

\end{tikzpicture}
\caption{Example of encoding a 3-SAT formula.}
\label{fig:exsat}
\end{figure}
\end{proof}

\progaltnp*
\begin{proof}
Observe that the only difference between the statement of this theorem and that
of
\cref{thm:pes-altnp} is that here we assume the PES to be the unfolding of a
given program~$P$ under the relation~$\indep_P$.

As a result
the problem is obviously in NP, as restricting the class of PESs that we have as
input cannot make the problem more complex.

However, showing that the problem is NP-hard requires a new encoding, as the
(simple) encoding given for \cref{thm:pes-altnp} generates PESs that may not be
the unfolding of any program.
Recall that two events in the unfolding of a program are in
immediate conflict only if they are lock statements on the same variable.
So, in \cref{fig:exsat}, for instance, since $t_1 \cfl f_1$ and $f_1 \cfl d_2$, then
necessarily we should have $t_1 \cfl d_2$, as all the three events
should be locks to the same variable.

For this reason we give a new encoding of the 3-SAT problem into our problem.
As before,
let $V=\set{v_1, \ldots, v_n}$ be a set of Boolean variables.
Let $\phi \eqdef c_1 \land \ldots \land c_m$ be a 3-SAT formula, where
each clause $c_i \eqdef l_i \lor l'_i \lor l''_i$ comprises three literals.
A literal is either a Boolean variable $v_i$ or its negation $\setneg{v_i}$.
As before,
for a variable $v$, let $\posc v$ denote the set of clauses where~$v$ appears
positively and $\negc v$ the set of clauses where it appears negated.
We assume that every variable only appears either positively or negatively in a
clause (or does not appear at all), as clauses where a variable happens both
positively and negatively can be removed from~$\phi$.
As a result $\posc v \cap \negc v = \emptyset$ for every variable~$v$.

Let us define a program~$P_\phi$ as follows:
\begin{itemize}
\item For each Boolean variable $v_i$ we have two threads in~$P$, $t_i$
corresponding to $v_i$ (true), and $f_i$ corresponding to $\setneg{v_i}$ (false).
We also have one lock~$l_{v_i}$.
\item Immediately after starting, both threads $t_i$ and $f_i$ lock on~$l_{v_i}$.
This scheme corresponds to choosing a Boolean value for variable~$v_i$: the thread
that locks first chooses the value of~$v_i$.
\item For each clause $c_j \in \phi$, we have a thread $d_j$ and a lock $l_{c_j}$.
The thread contains only one statement which is locking $l_{c_j}$.
\item For each clause $c_j \in \posc{v_i} \cup \negc{v_i}$, the program contains one
thread $r_{\tup{v_i,c_j}}$ (run for variable $v_i$ in clause $c_j$). This thread contains
only one statement which is locking $l_{c_j}$.
\item After locking on $l_{v_i}$, thread $t_i$ starts in a loop all threads $r_{\tup
{v_i,c_j}}$, for $c_j \in \posc{v_i}$. Since we do not have thread creation in
our program model, we start a thread as follows: for each
thread~$r_{\tup{v_i,c_j}}$ we create an additional lock that is initially
acquired. Immediately after starting, $r_{\tup{v_i,c_j}}$ tries to acquire it.
When~$t_i$ wishes to start the thread, it just releases the lock, effectively
letting the thread start running.
\item Similarly, after locking on $l_{v_i}$, thread $f_i$ starts in a loop all threads
$r_{\tup{v_i,c_j}}$, for $c_j \in \negc{v_i}$.
\end{itemize}

When $P_\phi$ is unfolded, each statement of the program gives rise to exactly one
event in the unfolding. Indeed, by construction, each $t_i$ or $f_i$ thread starts
by a lock event and then causally lead to one $r$ event per clause the variable $v_i$
appears in. Any two of them concern different clauses and thus different locks,
and they are independent.

Let $C \eqdef \emptyset$ be an empty configuration, $D \eqdef \set{d_1, \ldots, d_m}$, and $U$ the set of all events in the unfolding of the program.

We now show that $\phi$ is satisfiable iff there exists an alternative $J$ to~$D$ after
$C \eqdef \emptyset$ in~$\unf{P_\phi,\indep_{P_\phi}}$.
This alternative is constructed by selecting for each
event $d_j \in D$ and event $e$ in immediate conflict. By construction of $P_\phi$,
it is a $r_{\tup{v_i,c_j}}$ where $v_i$ is a literal in clause $h(d_j)=c_j$. Moreover,
$C \cup J = J$ must be a configuration.
In order to satisfy the causal closure, since
\[
   {<} \eqdef
   \set{\tup{t_i,r_ {\tup{v_i,c_j}}}:c_j \in \posc{v_i}}
   \cup
   \set{\tup{f_i,r_{\tup{v_i,c_j}}}:c_j \in \negc{v_i}},
\]
$J$ must also contain the $t_i$ or $f_i$ preceding $r_{\tup{v_i,c_j}}$. The
conflict-freeness implies that if $t_i\in J$ then $f_i\not\in J$ and vice-versa.
Therefore, formula $\phi$ is satisfiable iff an alternative $J$ to~$D$ exists.

There are at most $2|V|+|\phi|(|V|+1)$ events, so the construction can be achieved
in polynomial time.
Therefore our problem is NP-hard.

The construction of $\unf{P_\phi}$ is illustrated in~\figref{progsat} for:
$$\phi\eqdef
\underbrace{(x_1 \lor \setneg{x_2} \lor x_3)}_{c_1} \land \underbrace{(\setneg{x_1}
\lor \setneg{x_2})}_{c_2} \land \underbrace{(x_1 \lor \setneg{x_3})}_{c_3}$$

\begin{figure}[ht]
\centering

\begin{tikzpicture}[node distance=1.5cm,
	square/.style={regular polygon,regular polygon sides=4,inner sep=0,draw}]

\node at (-3,-3) [square,label=below:$t_1$] (A1) {$x_1$};
\node[square,label=below:$f_1$,right of=A1] (B1) {$\setneg{x_1}$};
\draw[dotted] (A1) -- node[below]{$l_{x_1}$} (B1);
\node at (0,-3) [square,label=below:$t_2$] (A2) {$x_2$};
\node[square,label=below:$f_2$,right of=A2] (B2) {$\setneg{x_2}$};
\draw[dotted] (A2) -- node[below]{$l_{x_2}$} (B2);
\node at (3,-3) [square,label=below:$t_3$] (A3) {$x_3$};
\node[square,label=below:$f_3$,right of=A3] (B3) {$\setneg{x_3}$};
\draw[dotted] (A3) -- node[below]{$l_{x_3}$} (B3);

\node[square,label=below:$r_{\tup{x_1,c_3}}$,above of=A1] (R13) {\phantom{$c_1$}};
\draw[->] (A1) -- (R13);
\node[square,label=below:$r_{\tup{x_1,c_1}}$,left of=R13] (R11) {\phantom{$c_1$}};
\draw[->] (A1) -- (R11);
\node[square,label=below:$r_{\tup{x_1,c_2}}$,above of=B1] (R12) {\phantom{$c_1$}};
\draw[->] (B1) -- (R12);
\node[square,label=below:$r_{\tup{x_2,c_1}}$,above of=A2] (R21) {\phantom{$c_1$}};
\draw[->] (B2) -- (R21);
\node[square,label=below:$r_{\tup{x_2,c_2}}$,above of=B2] (R22) {\phantom{$c_1$}};
\draw[->] (B2) -- (R22);
\node[square,label=below:$r_{\tup{x_3,c_1}}$,above of=A3] (R31) {\phantom{$c_1$}};
\draw[->] (A3) -- (R31);
\node[square,label=below:$r_{\tup{x_3,c_3}}$,above of=B3] (R33) {\phantom{$c_1$}};
\draw[->] (B3) -- (R33);

\begin{scope}[node distance=2cm]
\node at (-2,0.5) [square,label=above:$d_1$] (D1) {$c_1$};
	\begin{scope}[red]
	\draw[dotted] (D1) -- node[left,very near start]{$l_{c_1}$} (R11);
	\draw[dotted] (D1) -- (R21);
	\draw[dotted] (D1) -- (R31);
	\draw[dotted] (R11) to [bend left=25] (R21);
	\draw[dotted] (R11) to [bend left=25] (R31);
	\draw[dotted] (R21) to [bend left=25] (R31);
	\end{scope}
	
\node[square,label=above:$d_2$,right of=D1] (D2) {$c_2$};
	\begin{scope}[blue]
	\draw[dotted] (D2) -- node[left,very near start]{$l_{c_2}$} (R12);
	\draw[dotted] (D2) -- (R22);
	\draw[dotted] (R12) to [bend left=25] (R22);
	\end{scope}
	
\node[square,label=above:$d_3$,right of=D2] (D3) {$c_3$};
	\begin{scope}[green!50!black]
	\draw[dotted] (D3) -- node[left,very near start]{$l_{c_3}$} (R13);
	\draw[dotted] (D3) -- (R33);
	\draw[dotted] (R13) to [bend left=25] (R33);
	\end{scope}
\end{scope}

\end{tikzpicture}
\caption{Program unfolding encoding a 3-SAT formula.}
\label{fig:progsat}
\end{figure}
\end{proof}

\thmcexnp*
\begin{proof}
   Given a Petri net $N \eqdef \tup{P, T, F, m_0}$,
	a transition $t \in T$,
   an independence relation ${\indep} \subseteq T \times T$,
   the unfolding $\les \eqdef \unf{M_N, \indep}$ of~$N$,
	and a configuration~$C$ of $\les$, we need to prove that
	deciding whether $h^{-1}(t) \cap \cex C = \emptyset$ is an NP-complete
   problem.

	We first prove that the problem is in~NP. This is achieved using a \emph{guess and
		check} non-deterministic algorithm to decide the problem.
	Let us non-deter\-min\-istically choose a configuration $C' \subseteq C$, in linear
	time
	on the input. A linearisation of~$C'$ is chosen and used to compute the marking~$m$ reached.
	We check that~$m$ enables~$t$ and that for any $<$-maximal event~$e$ of~$C$,
   $\lnot (h(e) \indep t)$ holds.
   Both tests can be done in polynomial time. If both tests succeed then we
   answer \emph{yes}, otherwise we answer \emph{no}.

	We now prove that the problem is NP-hard, by reduction from the 3-SAT problem.
	Let $V=\set{v_1, \ldots, v_n}$ be a set of Boolean variables.
	Let $\phi \eqdef c_1 \land \ldots \land c_m$ be a 3-SAT formula, where
	each clause $c_i \eqdef l_i \lor l'_i \lor l''_i$ comprises three literals.
	A literal is either a Boolean variable $v_i$ or its negation $\setneg{v_i}$.
	For a variable $v$, $\posc v$ denotes the set of clauses where~$v$ appears
	positively and $\negc v$ the set of clauses where it appears negated.
	
	Given $\phi$, we construct a $3$-safe net~$N_\phi$,
	an independence relation~$\indep$,
	a configuration~$C$ of the unfolding $\les$, and a transition~$t$ from~$N_\phi$
	such that $\phi$ is satisfiable iff some event in $\ex C$ is labelled by~$t$ :
	\begin{itemize}
		\item The net contains one place $d_i$ per clause $c_i$, initially empty.
		\item For each variable $v_i$ are two places $s_i$ and $s'_i$. Places $s_i$ initially
		contain $1$ token while places $s'_i$ are empty.
		\item For each variable $v_i$, a transition $p_i$ takes into account positive values
		of the variable. It takes a token from $s_i$, puts one in $s'_i$ (to move on to the
		other possibility for this variable) and puts one token in all places associated with
		clauses $c_j \in \posc {v_i}$. This transition mimics the validation of clauses where
		the variable appears as positive.
		\item For each variable $v_i$, a transition $n_i$ takes into account negative values
      of the variable. It takes a token from $s'_i$ and puts one token in all
      places associated
      with clauses $c_j \in \negc {v_i}$. It also removes one token from all
      places associated
		with clauses $c_j \in \posc {v_i}$, that have been marked by some $p_k$ transition.
      This transition $n_i$ mimics the validation of clauses where the variable
      appears as negative.
		\item Finally, a transition $t$ is added that takes a token from all $d_i$. Thus,
		it can only be fired when all clauses are satisfied, \ie{} formula $\phi$ is satisfied.
	\end{itemize}

 	The independence relation $\indep$ is the smallest binary, symmetric,
 	irreflexive relation such that
 	$p_i \indep p_j$ exactly when $i \ne j$ and
 	$p_i \indep n_j$ exactly when $i \ne j$.
 	Recall that $p_i, n_i$ correspond to respectively to the positive and negative valuations
 	of variable~$v_i$.
   In other words, the \emph{dependence} relation
   ${\depen} \eqdef T \times T \setminus {\indep}$ is the reflexive closure of the set
 	\[
 	\set{\tup{p_i,n_i} \colon 1 \le i \le n}
 	\cup
 	\set{\tup{t,p_i} \colon 1 \le i \le n}
 	\cup
 	\set{\tup{t,n_i} \colon 1 \le i \le n}
 	\]
 	Relation $\indep$ is an independence relation because:
 	\begin{itemize}
 		\item $\forall i\ne j$, transitions $p_i$ and $p_j$ do not share any input place ;
 		\item $\forall i\ne j$, the intersection between $\post{p_i}$ and $\pre{n_j}$ might
 		not be empty, but $n_j$ is always preceded by (and thus enabled after) $p_j$ (and
 		not $p_i$). So firing $p_i$ cannot enable, nor disable, $p_j$, and firing $p_i$
 		and $n_j$ in any order reaches the same state.
 	\end{itemize}
	
	Finally, configuration~$C$ contains exactly one event per $p_i$ and one per $n_i$,
	hence $2|V|$ events. This is because transition $n_i$ is dependent \emph{only}
	of $p_i$, and independent of (thus concurrent to) any other transition in~$C$.
	Thus formula $\phi$ has a model iff there is an event $e \in \en C$ labelled by~$t$.
	Indeed, initially only positive transitions $p_i$ are enabled that assign a positive
	value to their corresponding variable $v_i$. They add a token in all places $d_j$
	such that $c_j \in \posc{v_i}$. Then, when a negative transition $n_i$ fires, it deletes
	the tokens from these $d_j$ that had been created by $p_i$ since the variable cannot
	allow for validating these clauses anymore. It also adds tokens in the $d_k$ such
	that $c_k \in \negc{v_i}$ since the clauses involving $\setneg{v_i}$ now hold.
   Therefore,
	the number of tokens in a place $d_j$ is the number of variables (or their negation)
	that validate the associated clause. Formula $\phi$ is satisfied when all clauses
	hold at the same time, \ie{} each clause is validated by at least one variable. Thus
	all places $d$ must contain at least one token (and enable $t$) for $\phi$ satisfaction.
	
	The construction of $N_\phi$ is illustrated in~\figref{pnsat} for:
	$$\phi\eqdef
	\underbrace{(x_1 \lor \setneg{x_2} \lor x_3)}_{c_1} \land \underbrace{(\setneg{x_1}
		\lor \setneg{x_2})}_{c_2} \land \underbrace{(x_1 \lor \setneg{x_3})}_{c_3}$$
	
	\begin{figure}[ht]
		\centering

\begin{tikzpicture}[->,>=stealth,node distance=1cm,
	place/.style={circle,draw,inner sep=0pt,minimum size=4mm},
	transition/.style={rectangle,draw,inner sep=0pt,minimum size=3mm}]

\node[place,tokens=1,label=above:$s_1$] (S1) {};
\node[place,tokens=1,label=above:$s_2$,right of=S1,node distance=2cm] (S2) {};
\node[place,tokens=1,label=above:$s_3$,right of=S2,node distance=2cm] (S3) {};
\foreach \x in {1,2,3} {
	\node[transition,label=left:$p_\x$,below of = S\x] (P\x) {}
		edge [pre] (S\x);
	\node[place,label=left:$s'_\x$,below of=P\x] (SP\x) {}
		edge [pre] (P\x);
	\node[transition,label=left:$n_\x$,below of = SP\x] (N\x) {}
		edge [pre] (SP\x);
}
\foreach \x in {1,2,3}
	\node[place,label=left:$d_\x$,below of=N\x] (D\x) {};
\node[transition,label=below:$t$,below of=D2] (T) {}
	edge [pre] (D1)
	edge [pre] (D2)
	edge [pre] (D3);

\draw (P1) to [bend right=45] (D1);
\draw (P1) -- (D3);
\draw (D1) -- (N1);
\draw (N1) -- (D2);
\draw (D3) -- (N1);
\draw (N2) -- (D1);
\draw (N2) -- (D2);
\draw (P3) -- (D1);
\draw (D1) -- (N3);
\draw (N3) -- (D3);

\end{tikzpicture}
		\caption{Petri Net encoding a 3-SAT formula.}
		\label{fig:pnsat}
	\end{figure}
	
\end{proof}

\section{Proofs for Causality Trees}

\procaus*

\begin{proof}
Firstly, we show that $e < e'$ holds iff $e = \tmax{e',i} \lor e < \tmax{e', i}$.
\begin{itemize}
\item
   Direction $\Rightarrow$.
	Assume that $e < e'$.
   This implies that $e \in \lceil e'\rceil$ and there must exist 
	$\hat e \in [e']$ such that $\hat e= tmax(e',i)$.
   Since both~$e$ and~$\hat e$ are events from thread~$i$, and both are
   contained in~$[e']$ they cannot be in conflict, but
   $\lnot (h(e) \indep h(\hat e))$.
   Then either $e = \hat e$ or $e < \hat e$.
\item
   Direction $\Leftarrow$.
   Let $\hat e \eqdef \tmax{e',i}$.
   Since $i \ne i'$ we have that $\hat e \ne e'$,
   and since $\hat e \in [e']$ we have that $\hat e < e'$.
	Let $e \in \les$ be any event such that either $e = \hat e$ or $e < \hat e$.
   We then have $e \leqslant \hat e < e'$, so clearly $e < e'$.
\end{itemize}

Now we show that
$e \cfl e'$ holds iff there is some $l \in \locks$ such that
$\lmax{e, l} \cfl \lmax{e', l}$.

\begin{itemize}
\item
   Direction $\Rightarrow$.
   Assume that $e \cfl e'$ holds.
   Then necessary there exist events
   $e_1' \in [e]$ and $e_2' \in [e']$ such that $e_1' \icfl e_2'$. 
   Since only lock events touching the same variable are able 
   to create immediate conflicts, we obviously know that $\exists l \in \locks:
   h(e_1') = h(e_2') = \tup{\lock,l}$.
   Since $e_1'\in [e]$ then $\exists e_1 \in [e]: \exists e_1 = \lmax{e,l}$.
   Similarly, $\exists e_2 \in [e']: e_2 = \lmax{e',l}$.
   Both $e_1$ and $e_2$ are $<$-maximal events, so $e_1' < e_1$ or $e_1'
   =e_1$ and $e_2 < e_2'$ or $e_2 = e_2'$. The conflict is inherited, having
   $e_1' \icfl e_2'$ implies $e_1 \cfl e_2$.
\item
   Direction $\Leftarrow$.
   Assume that there is some $l \in \locks$ such that
   $\lmax{e,l} \cfl \lmax{e',l}$ and
   let $e_1 \in [e]: e_1 = \lmax{e,l}$ and $e_2 \in [e']: e_2 = \lmax{e',l}$,
   then $e_1 \cfl e_2$.
   Since $e_1 \in [e]$, we have $e_1 < [e]$. Similarly, $e_2 \in [e']$,
   \ie, $e_2 < e'$.
   The conflict is inherited and $e_1 \cfl e_2$, so necessarily $e \cfl e'$.
\end{itemize}	
\end{proof}

\section{Experiments with the SV-COMP'17 Benchmarks}
\label{sec:svcomp17}

In this section we present additional experimental results using the SV-COMP'17
benchmarks. In particular we use the benchmarks from the \texttt{pthread/}
folder.\footnote{See
\url{https://github.com/sosy-lab/sv-benchmarks/releases/tag/svcomp17}.}

\newcommand\newrow{\\[0.0pt]}
\newcommand\param[1]{\scriptsize(#1)}

\newcommand\cmidrules{
  \cmidrule(r){1-1}
  \cmidrule(r){2-3}
  \cmidrule(r){4-5}
}

\begin{table}[!bt]
\footnotesize

\setlength\tabcolsep{3pt}
\def\sep{\hspace{25pt}}
\def\tinysep{\hspace{5pt}}
\centering
\tt
\begin{tabular}[t]{l@{\sep}rr@{\sep}rr}	
\toprule
  \multicolumn{1}{l}{\rm Benchmark}
& \multicolumn{2}{l}{\rm \dpu (k=1)}  
& \multicolumn{2}{l}{\rm \nidhugg}

\newrow
\cmidrules

  {\rm Name}

& {\rm Time}
& {\rm Bug}

& {\rm Time}
& {\rm Bug}

\newrow
\midrule

\rm\tt bigshot-p-false                       &    0.46  &     y &   0.20  &    y \newrow           
\rm\tt bigshot-s2-true                       &    0.45  &     n &   0.20  &    n \newrow           
\rm\tt bigshot-s-true                        &    0.45  &     n &   0.18  &    n \newrow           
\rm\tt fib-bench-false                       &    0.87  &     y &   0.69  &    y \newrow           
\rm\tt fib-bench-longer-false                &    2.57  &     y &   1.57  &    y \newrow           
\rm\tt fib-bench-longer-true                 &    2.23  &     n &   2.75  &    n \newrow           
\rm\tt fib-bench-longest-false               &       TO &       &      TO &      \newrow           
\rm\tt fib-bench-longest-true                &       TO &       &      TO &      \newrow           
\rm\tt fib-bench-true                        &    0.89  &     n &   0.76  &    n \newrow           
\rm\tt indexer-true                          &       TO &       &      TO &      \newrow           
\rm\tt lazy01-false                          &    0.42  &     y &   0.82  &    y \newrow           
\rm\tt queue-false                           &    0.70  &     y &   0.21  &    y \newrow           
\rm\tt queue-longer-false                    &    0.96  &     y &   0.53  &    y \newrow           
\rm\tt queue-longest-false                   &    1.80  &     y &   0.53  &    y \newrow           
\rm\tt queue-ok-longer-true                  &    0.44  &     n &   0.29  &    n \newrow           
\rm\tt queue-ok-longest-true                 &    0.46  &     n &   0.37  &    n \newrow           
\rm\tt queue-ok-true                         &    0.49  &     n &   0.19  &    n \newrow           
\rm\tt sigma-false                           &    0.30  &     y &   0.24  &    y \newrow 
\rm\tt singleton-false                       &    0.48  &     y &   0.21  &    y \newrow           
\rm\tt singleton-with-uninit-problems-true   &    0.47  &     n &   0.20  &    n \newrow 
\rm\tt stack-false                           &    0.66  &     y &   0.21  &    y \newrow           
\rm\tt stack-longer-false                    &    0.94  &     y &   1.50  &    y \newrow           
\rm\tt stack-longer-true                     &       TO &       &      TO &      \newrow           
\rm\tt stack-longest-false                   &    1.85  &     y &   4.48  &    y \newrow           
\rm\tt stack-longest-true                    &       TO &       &      TO &      \newrow           
\rm\tt stack-true                            &    0.52  &     n &   0.35  &    n \newrow           
\rm\tt stateful01-false                      &    0.44  &     y &   0.20  &    y \newrow           
\rm\tt stateful01-true                       &    0.44  &     n &   0.19  &    n \newrow           
\rm\tt twostage-3-false                      &    0.48  &     y &   0.40  &    y \newrow

\bottomrule
\end{tabular}
\vspace{4pt}
\caption{\footnotesize Comparing \dpu and \nidhugg
on the \texttt{pthread/} folder of the SV-COMP'17 benchmarks.
Machine: Linux, Intel Xeon 2.4GHz.
TO: timeout after 8 min.
Columns are:
Time in seconds,
Bug: \texttt{y} if the bug is detected, \texttt{n} if no bug is detected.}

\label{tab:svcomp17}
\end{table}

\let\newrow\undefined
\let\param\undefined
\let\cmidrules\undefined

All benchmarks were taken from the official repository of the SV-COMP'17.
We modified almost all of them to remove the dataraces, using one or more
additional mutexes.
All benchmarks have between~50 and~170 lines of code.
Most of them employ~2 or~3 threads but some of them reach up to~7 threads.

The first remark is that both tools correctly classified every benchmark
as \emph{buggy} or \emph{safe}.
In \dpu we used QPOR with $k=1$ and the exploration was optimal on all
benchmarks. That means that \nidhugg and \dpu are doing a very similar
exploration of the statespace in these benchmarks.
As a result, it is not surprising that both tools timeout on exactly the same
benchmarks (5 out of 29).
On the other hand most benchmarks in this suite are quite simple for DPOR
techniques: the longest run time for \dpu was 2.6s (and 4.5s for \nidhugg).

In general the run times for \nidhugg are slighly better than those of
\dpu. We traced this down to two factors.
First, while \dpu is in general faster at exploring new program interleavings,
it has a slower startup time.
Second, when \dpu finds a bug, it does not stop and report it,
it continues exploring the state space of the program.
This is in contrast to~\nidhugg, which stops on the
first bug found. We will obviously implement a new mode in~\dpu where the tool
stops on the first bug found, but for the time being this visibly affects \dpu on
bechmarks such as the \texttt{queue-*-false}, where~\nidhugg is almost twice
faster than~\dpu.

\end{document}